\keywords{isomorphism problem, definable sets, $\omega$-categoricity}
\newcommand{\para}[1]{\mypar{#1}}  
\def\0{$\emptyset$}
\theoremstyle{definition}
\newtheorem{example}[thm]{Example} 
\theoremstyle{plain}
\newtheorem{remark}[thm]{Remark} 
\newtheorem{theorem}[thm]{Theorem} 
\newtheorem{lemma}[thm]{Lemma} 
\newtheorem{corollary}[thm]{Corollary} 
\newtheorem{claim}[thm]{Claim} 
\newcommand{\mypar}[1]{\subsection*{#1}}
\newcommand*\from{\colon}
\newcommand{\st}{\ |\ }
\newcommand{\setof}[2]{\left\{\, #1 \, | \, #2 \, \right\}}
\newcounter{quotecount}
\renewcommand{\subset}{\subseteq}
\newcommand{\atoms}{\mathsf{Atoms}} 
\newcommand{\Atoms}{\atoms}   
\newcommand{\eqdef}{\stackrel {\text{def}} =}
\renewcommand{\setminus}{-}
\newcommand{\tri}[4]{
  \vcenter{\xy \xygraph{!{/r9px/:} 
       [] [rrr]{\scriptscriptstyle{\circlearrowright}} 
          !P3{~={#4}~><{@{.}}~*{\ifcase\xypolynode \or #2 \or #1 \or #3\fi}}}
       \endxy}
}
\newcommand{\fixtri}[4]{
  \vcenter{\xy \xygraph{!{/r9px/:} 
       []  !P3{~={#4}~><{@{.}}~*{\ifcase\xypolynode \or #2 \or #1 \or #3\fi}}}
  \endxy}
}
\newcommand{\dom}{\text{Dom}}
\renewcommand{\int}[1]{{[#1]}}
\newcommand{\arc}[2]{\textit{cons}_{k,l}}
\newcommand{\set}[1]{\{#1\}}
\newcommand{\str}[1]{{\mathbb{#1}}}
\newcommand{\problem}[1]{{\sc #1}}
\newcommand{\decproblemdef}[3]
{\medskip\noindent
{\sffamily Problem:} {\problem{#1}}\\
{\sffamily Input:} #2\\
{\sffamily Decide:} #3
}
\newcommand{\atom}[1]{\underline{#1\mkern-1mu}\mkern1mu}
\newcommand{\val}{\mathrm{val}}
\newcommand{\supp}[1]{\mathrm{sup}(#1)}
\newcommand{\supparg}[2]{\mathrm{supp}_{#1}(#2)}
\newcommand{\codom}{\text{Codom}}
\newcounter{col}
\label\expandafter{col:\arabic{col}}%
\label\expandafter{col:\arabic{col}}%
\label\expandafter{col:\arabic{col}}%
\newcommand{\colsec}[1]{}
\newif\ifisfinalversion
\theoremstyle{definition}
\newcommand{\length}[1]{\text{length}(#1)}
\newcommand{\order}[1]{\text{order}(#1)}
\newcommand{\orbit}[2]{\text{orbit}_{#1}(#2)}
\begin{document}

\title{Definable isomorphism problem}

\author[K.~Keshvardoost]{Khadijeh Keshvardoost\rsuper{a}}	
\address{\lsuper{a}Department of Mathematics, Velayat University, Iranshahr, Iran}	

\author[B.~Klin]{Bartek Klin\rsuper{b}}	
\address{\lsuper{b}University of Warsaw}	

\author[S.~Lasota]{S{\l}awomir Lasota\rsuper{b}}	
\thanks{
This work has been partially supported by the ERC project `Lipa' within the
EU Horizon 2020 research and innovation programme, No. 683080
(the second author), the NCN grant 2016/21/B/ST6/01505 (the third author) and
the NCN grant 2016/21/D/ST6/01485 (the fifth author).
}

\address{\lsuper{c}CNRS, LaBRI, Universit\'e de Bordeaux}	
\author[J.~Ochremiak]{Joanna Ochremiak\rsuper{c}}	

\author[Sz.~Toru{\'n}czyk]{Szymon Toru{\'n}czyk\rsuper{b}}	

\begin{abstract}
We investigate the isomorphism problem in the setting of definable sets (equivalent to sets with atoms):
given two definable relational structures, are they related by a definable isomorphism?
Under mild assumptions on the underlying structure of atoms, we prove decidability of the problem. 
The core result is parameter-elimination: existence of
an isomorphism definable with parameters implies existence of an isomorphism definable without parameters.
\end{abstract}

\maketitle



\newcommand{\spl}{\raisebox{1pt}{$\scriptscriptstyle+$}}
\medskip

\section{Introduction}
We consider \emph{hereditarily first-order definable sets}, which are usually infinite, but can be finitely described and are therefore
amenable to algorithmic manipulation. 
We drop the qualifiers \emph{herediatarily first-order}, and simply call them definable sets in what follows. They are parametrized by a fixed underlying relational structure
$\atoms$ whose elements are called \emph{atoms}.
%
%
\begin{example}\label{ex:intro}
Let $\atoms$ be a countable set $\{\atom{1},\atom{2},\atom{3},\ldots\}$ equipped with
the equality relation only; we shall call this structure the {\em pure set}.
  Let 
\begin{align*}
V&=\setof{\set{a,b}}{a,b\in\atoms, a\neq b},\\
E&=\setof{(\set{a,b},\set{c,d})}{a,b,c,d\in\atoms
, a\neq b \wedge a \neq c \wedge a\neq d \wedge b \neq c \wedge b \neq d \wedge c \neq d}.
\end{align*}
Both $V$ and $E$ are  definable sets
(over $\atoms$),
as they are constructed from $\atoms$ using (possibly nested) set-builder expressions with first-order guards ranging over $\atoms$. In general, we allow finite unions in the definitions, and 
finite tuples (as above) are allowed for notational
convenience.
Precise definitions are given in Section~\ref{sec:atoms}. The pair $G=(V,E)$
is also a definable set, in fact, a definable graph. It is an infinite Kneser graph 
(a generalization of the famous Petersen graph):
its vertices are all two-element subsets of $\atoms$,
and two such subsets are adjacent iff they are disjoint.

The graph $G$ is {\em $\emptyset$-definable}: its definition does not refer to any particular 
elements of $\atoms$. In general, 
one may refer to a finite set of parameters  $S\subset\atoms$ to describe an \emph{$S$-definable} set.
 For instance, the set $\setof{a}{a\in\atoms,a\neq \atom 1\land a\neq \atom 2}$ is $\set{\atom 1,\atom 2}$-definable. 
 Definable sets are those which are $S$-definable for some finite $S\subset\atoms$.
 \qed
\end{example}

We remark that in the pure set $\atoms$, every first-order formula is effectively equivalent to a quantifier-free formula, via a simple quantifier-elimination procedure.
Thus, as long as complexity issues are ignored and decidability is the only concern, in the case of the pure set, we can safely restrict to quantifier-free formulas,
however, in general, definable sets may include arbitrary first-order formulas.

A definable function $f:X\to Y$ is simply 
a function whose domain $X$, codomain~$Y$, and graph 
$\Gamma(f)\subset X\times Y$ are definable sets.
A relational structure is definable if its signature, universe,
and interpretation function
that maps each relation symbol to a relation on the 
universe, are definable\footnote{A structure over a finite signature is definable in $\atoms$ if and only if it is interpretable in $\atoms$, in the model-theoretic sense.}.
Finally, a definable {isomorphism} between definable structures over the same signature is a definable bijective
mapping between their universes that preserves and reflects every relation in the signature. Likewise one introduces, e.g., definable {homomorphisms}.
All hereditarily finite sets (finite sets, whose 
elements are finite, and so on, recursively) are definable, and every finite relational structure over a finite signature is (isomorphic to) a definable one.

\mypar{Contribution} 
The classical {\em isomorphism problem} asks whether two given \emph{finite} structures are isomorphic.
In this paper, we consider its counterpart in the setting of definable sets
(the problem is called \emph{definable isomorphism problem} in the sequel): 
given two definable structures $\str A,\str B$ over the same definable signature $\Sigma$, 
all over the same fixed structure $\Atoms$, are they related by a definable isomorphism?
Note that definable structures can be meaningfully considered as input to a computational problem since they are finitely described using the set-builder notation and first-order formulas in the language of $\atoms$.
The structure $\atoms$ is considered here in a parametric manner, not as a part of input: every structure $\atoms$ induces a different decision problem.

As our main result we prove, under a certain assumptions on the structure $\atoms$,
that the definable isomorphism problem is decidable.
The key technical difficulty is to show that every two 
$S$-definable structures related by a definable isomorphism are also related by an $S$-definable one.
(When $S = \emptyset$ this is parameter elimination: 
existence of an isomorphism defined with parameters enforces existence of one defined without parameters.)
Having this, the problem reduces to testing whether two $S$-definable structures are related by an $S$-definable isomorphism,
which in turn reduces to the first-order satisfiability problem in $\atoms$. 

As witnessed by Example~\ref{ex:nondefiso} below, existence of an isomorphism does not guarantee existence of a definable one.
Therefore we do not solve the \emph{isomorphism problem} for definable structures, which asks whether two given definable structures (over the same signature) are isomorphic.
In fact, the decidability status of the latter problem remains an intriguing open question, even for $\atoms$ being the pure set.

\mypar{Motivation and related work} 
This paper is part of a programme aimed at generalizing classical decision problems 
(for instance the homomorphism problem, studied recently in~\cite{lics2015,KLOT16}),
and computation models such as automata~\cite{BKL14}, Turing machines~\cite{bklt,KLOT14} and programming languages~\cite{BBKL12,BT13,KS16,KT16},
to sets with atoms. For other applications of sets with atoms (called there {\em nominal sets}) in computing, see~\cite{pitts}.

Isomorphism testing is at the core of many decision problems in combinatorics and logic. 
In case of finite graphs it is well known to be solvable in NP, and since recently  in quasi-polynomial time~\cite{Babai16}.
Whether it can be solved in P is still an extremely challenging open question, and only special cases are shown so by now, e.g.~\cite{Luks82}.

\mypar{Acknowledgement}
We are grateful to Pierre Simon for valuable discussions.



\section{Preliminaries} \label{sec:atoms}
\colsec{Section~\ref{sec:atoms}}
Throughout the paper, fix a countable relational structure $\atoms$, whose elements are called \emph{atoms}. 
We overload the notation and use the symbol $\atoms$ both for the relational structure and for the set of its elements,
hoping that this does not lead to confusion.
We assume that the vocabulary of $\Atoms$ is finite. 
We shall now formally introduce the notion of definable sets over $\atoms$, following~\cite{KT-POPL, lics2015,KLOT16}.

\mypar{Definable sets}\label{sec:atoms-atoms}
An \emph{expression} is either a variable (from some fixed infinite set or variables), 
or a formal finite union (including the empty union $\emptyset$) of {\em set-builder expressions} of the form 
\begin{align} \label{eq:set-builder}
\setof{e}{a_1,\ldots,a_n\in\Atoms, \phi},
\end{align}
where $e$ is an expression, $a_1,\ldots,a_n$ are pairwise different variables, and
$\phi$ is a first-order formula over the signature of $\atoms$.
The variables $a_1, \ldots, a_n$ are considered bound in $e$ and $\phi$.
The free variables in~\eqref{eq:set-builder} are those free variables of $e$ and of $\phi$ which are not among $a_1,\ldots,a_n$.

For an expression $e$ with free variables $V$, any valuation
$\val:V\to \atoms$ defines in an obvious way 
 a value $X=e[\val]$,
which is either an atom or a set,
formally defined 
by induction on the structure of $e$.
We then say that $X$ is a \emph{definable set over $\atoms$}, and that it is \emph{defined} by $e$ with $\val$.
When the structure $\atoms$ is obvious from the context, 
we simply speak of \emph{definable sets} without explicitly specifying $\atoms$.
Note that one set $X$ can be defined by many different expressions. 
Finally, observe that the family of definable sets is hereditary: every element of a definable set is a definable set (or an atom).

Sometimes we want to emphasize those atoms that appear in the image of the valuation
$\val:V\to \atoms$. 
For any finite set $S \subseteq \atoms$ of atoms with $\val(V)\subseteq S$ 
we say 
that $X$ is \emph{$S$-definable}.
Clearly, an $S$-definable set is also $T$-definable whenever $S\subseteq T$.

As syntactic sugar, we allow
atoms to occur directly in set expressions (these atoms are  called \emph{parameters}). For example, what we write as
the $\set{\atom 1}$-definable set $\set{a\st a\in\atoms, a\neq \atom 1}$ is formally defined 
by the expression $\set{a\st a\in\atoms, a\neq b}$, together with a valuation mapping $b$ to $\atom 1$. 
With this syntactic sugar, a definable set is determined by a sole expression~$e$ with parameters, but without valuation.

As a notational convention, when writing set-builder expressions~\eqref{eq:set-builder} we omit the formula $\phi$ when it is trivially true, and omit the
enumeration $a_1, \ldots, a_n \in \atoms$ when $n=0$.
This allows us, in particular, to write singletons, like $\set{\atom 1}$.

\begin{remark}\label{rem:conventions}\rm
To improve readability, it will be convenient to use standard set-theoretic
encodings to allow a more flexible syntax.
In particular, ordered pairs and tuples can be encoded e.g.~by Kuratowski pairs, $(x,y)=\{\{x,y\},\{x\}\}$.
We will also consider definable infinite families 
of symbols, such as $\set{R_x: x \in X }$, where $R$ is a symbol and $X$ is a definable set. Formally, such a family can be encoded as the set of ordered pairs $\set{R}\times X$, where 
the symbol $R$ is represented by some \0-definable set, e.g. $\emptyset$ or $\set{\emptyset}$. 
Here we use the fact that definable sets (over any fixed atoms) are closed under Cartesian products.
\end{remark}

\mypar{Definable relational structures}
Any object in the set-theoretic universe (a relation, a function, a relational structure, etc.) may be definable. For example, a definable relation on $X,Y$ is a relation $R\subset X\times Y$
which is a definable set of pairs, and a definable function $X \to Y$ is a  function whose graph is definable.
Along the same lines, a definable relational signature is a definable set of \emph{symbols} $\Sigma$,
partitioned into definable sets $\Sigma=\Sigma_1\uplus\Sigma_2\uplus\ldots\uplus\Sigma_l$ according to the arity of symbols.
We say that $\sigma$ has \emph{arity} $r$  if $\sigma\in \Sigma_r$, and
$l\in\mathbb{N}$ is thus the maximal arity of a symbol in $\Sigma$.

For a signature $\Sigma$, a definable $\Sigma$-structure $\str A$ consists of a definable universe $A$
and a definable interpretation function which
assigns a relation $\sigma^{\str A}\subset A^r$ to each relation symbol $\sigma\in\Sigma$ of arity~$r$.
(We denote structures using blackboard font, and their  universes using the corresponding symbol in italics).
More explicitly, such a structure can be represented by the tuple $\str A = (A, I_1, \ldots, I_l)$
where $I_r=\set{(\sigma,a_1,\ldots,a_r)\st \sigma \in\Sigma_r, (a_1,\ldots,a_r)\in \sigma^{\str A}}$
is a definable set for $r=1,\ldots, l$. 
It is not difficult to see that the interpretation $\sigma^{\str A}$ of every fixed symbol $\sigma \in \Sigma$ is definable.



\begin{remark}\label{rem:inter}\rm
    As argued in~\cite{KLOT16}, definable structures over finite signatures coincide, up to definable isomorphism, 
    with \emph{first-order interpretations with parameters} in $\atoms$, in the sense of model theory~\cite{hodges}.
    \qed
    \end{remark}

\begin{example}\label{ex:expressions}
The graph $G$ from Example~\ref{ex:intro} is a definable (over $\atoms$ being the pure set)
structure over a finite signature $\Sigma$ containing a single binary relation symbol.
To give an example of a definable structure over an infinite definable signature, extend $G$ to a structure
$\str A$ by infinitely many unary predicates representing 
the neighborhoods of each vertex of $G$.
To this end, define the signature 
$\Sigma=\set{E}\cup \setof{N_v} {v\in V}$, where $V=\setof{\set{a,b}}{a,b\in\atoms, a\neq b}$ is the vertex set of $G$ and $N$ is a symbol (cf. Remark~\ref{rem:conventions}).
The interpretation of $N_v$
is specified by the set $I_1=\setof{(N_v,w)}{(v,w)\in E}$
(where $E$ is defined by the expression from Example~\ref{ex:intro}).  
\qed
%
%
%
\end{example}

\mypar{Representing the input}
Definable relational structures can be input to algorithms, as they are finitely presented by expressions defining 
the signature, the universe, and the interpretation function. If the input is an $S$-definable set~$X$ defined by an expression $e$
with parameters $a_1, \ldots, a_n \in S$, then we also need to represent the tuple $a_1,\ldots, a_n$ of atoms. 
For example, in the case of pure set $\atoms$ these elements can be represented as arbitrary pairwise distinct numbers. 

\mypar{Definable isomorphism problem}
Let the structure of atoms be fixed and denoted by $\atoms$.
Recall that a definable function $f:X\to Y$ is a function whose graph $\Gamma(f)\subset X\times Y$ is a definable set.
A definable {isomorphism} between definable structures $\str A$, $\str B$ over the same signature $\Sigma$ is a definable bijective
function $h : A \to B$ between their universes that preserves and reflects every relation in the signature: for every $\sigma \in \Sigma$ of arity $r$
and every $r$-tuple $a_1, \ldots, a_r \in A$ of elements of $\str A$,
$(a_1, \ldots, a_r) \in \sigma^{\str A}$ if, and only if $(h(a_1), \ldots, h(a_r)) \in \sigma^{\str B}$.
Likewise one can also introduce definable {homomorphisms}, embeddings, etc.

We focus in this paper on the following family of decision problems 
(note that the structure $\atoms$ is fixed, and not part of input, and hence every choice of $\atoms$ yields a 
different decision problem):

\decproblemdef{Definable-isomorphism($\atoms$)}
{A definable signature $\Sigma$ and two definable $\Sigma$-structures $\str A$ and $\str B$.}
{Is there a definable isomorphism from $\str A$ to $\str B$?}
\vspace{2mm}

\begin{example} \label{ex:nondefiso}
Imposing the definability requirement on isomorphisms clearly does matter.
Let $\atoms$ be the pure set again, and
consider the following two $\emptyset$-definable graphs, each of them being an infinite, edgeless graph: 
\begin{align*}
V_1 &=  \setof{a}{a\in\atoms} = \atoms & V_2&=\setof{\set{a,b}}{a,b\in\atoms, a\neq b},\\
E_1 &= \emptyset & 
E_2 &=\emptyset.
\end{align*}
The two graphs are clearly isomorphic. On the other hand there is no \emph{definable} isomorphism between them,
simply because there is no definable bijection between $V_1$ and $V_2$,
as we will argue in Example~\ref{ex:cont} in Section~\ref{sec:orbitfinite}.
%
\qed
\end{example}



In this paper, we will always assume that the structure $\atoms$ is $\omega$-categorical.
We will also make assumptions regarding computability properties, least supports and denseness, as introduced below.


\mypar{$\omega$-categoricity}
We say that a countable structure $\atoms$ is \emph{$\omega$-categorical}
if any countable structure $\atoms'$ which satisfies the same first-order sentences as $\atoms$ is isomorphic to $\atoms$.
The following fundmanental theorem, due to  Ryll-Nardzewski, Engeler and Svenonius~\cite{Eng59,RN59,Sven59}, gives a useful characterization of $\omega$-categorical structures in terms of their automorphism groups. Below, automorphisms act
on the set $\atoms^n$ of all $n$-tuples of elements of $\atoms$ in a coordinatewise fashion.
For each $n\ge 1$, the set $\atoms^n$ is partitioned into the orbits of this action.

\begin{theorem}\label{thm:RN}
 A structure $\atoms$ is $\omega$-categorical if, and only if 
for each $n$,  the automorphism group of $\atoms$ induces 
finitely many orbits on the set $\atoms^n$.  Moreover, if $\atoms$ is $\omega$-categorical, then each orbit of this action can be defined by a first-order formula  with $n$ free variables.
\end{theorem}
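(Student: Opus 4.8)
The plan is to factor the statement through the space of complete types and to show that, in the relevant regime, orbits of $\mathrm{Aut}(\atoms)$ on $\atoms^n$ coincide with the complete $n$-types over the empty set realized in $\atoms$. Write $T = \mathrm{Th}(\atoms)$ for the first-order theory of $\atoms$. Automorphisms preserve every formula, so tuples in the same orbit always realize the same type; the whole difficulty is the converse and the counting. I would first establish the purely model-theoretic equivalence ``$\atoms$ is $\omega$-categorical $\iff$ for each $n$ there are only finitely many complete $n$-types consistent with $T$'', and then separately identify types with orbits in the finite-type case, from which both the equivalence in the theorem and the definability in the ``moreover'' clause follow.

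For the direction from finitely many types to $\omega$-categoricity, I would argue that if the Stone space $S_n(T)$ of complete $n$-types is finite for every $n$, then it is discrete, so every type is isolated by a single formula. Isolated types are realized in every model of $T$, so any two countable models of $T$ realize exactly the same finitely many types at every arity. A standard back-and-forth argument then builds an isomorphism between any two such countable models: enumerate both domains and extend a partial type-preserving map one element at a time, using the fact that whenever $\bar a$ and $\bar b$ have the same type and an isolating formula $\theta(\bar x, y)$ witnesses the type of $\bar a c$, the sentence $\exists y\,\theta(\bar x, y)$ transfers from $\bar a$ to $\bar b$ and supplies a matching element. Hence all countable models of $T$ are isomorphic, i.e.\ $\atoms$ is $\omega$-categorical.

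For the converse I would use the contrapositive: if some $S_n(T)$ is infinite, then being an infinite compact Hausdorff space it has a non-isolated point $p$. By the omitting types theorem there is a countable model of $T$ omitting $p$, while by consistency of $p$ together with L\"owenheim--Skolem there is a countable model realizing $p$; these two countable models cannot be isomorphic, contradicting $\omega$-categoricity. To connect types to orbits I would run a back-and-forth argument \emph{inside} $\atoms$ itself: assuming finitely many, hence isolated, types, any two tuples $\bar a,\bar b$ with the same type can be matched coordinatewise into an automorphism, exactly because isolating formulas let each one-element extension on one side be mirrored on the other. This shows ``same type $\Rightarrow$ same orbit'', so orbits and realized types coincide and their cardinalities agree; finiteness of one is finiteness of the other.

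The ``moreover'' clause is then immediate: in the $\omega$-categorical case each orbit is the realization set of some type $p$, and since $p$ is isolated by a formula $\theta_p(\bar x)$ with $n$ free variables, a tuple lies in the orbit iff it satisfies $\theta_p$; thus $\theta_p$ defines the orbit. I expect the main obstacle to be the in-structure back-and-forth establishing that equal types force a common orbit: this is the step that genuinely bridges the logical and group-theoretic sides of the statement, and it relies essentially on the isolation of types, which is itself a consequence of the finiteness assumption. Care is also needed to ensure that every type under consideration is actually realized in the countable structures involved.
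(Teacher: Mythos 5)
The paper itself does not prove this statement: it is the classical theorem of Ryll-Nardzewski, Engeler and Svenonius, quoted with citations only. So your proposal must stand on its own; the route you take (Stone spaces of types, omitting types, back-and-forth) is the standard textbook proof, and most individual steps are correct: the back-and-forth between countable models using isolating formulas, the contrapositive via the omitting types theorem, the in-structure back-and-forth giving ``same type $\Rightarrow$ same orbit'' when all types are isolated, and the derivation of the ``moreover'' clause from isolating formulas.

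There is, however, a genuine gap in the direction ``finitely many orbits $\Rightarrow$ $\omega$-categorical''. Your two building blocks are: (A) $\omega$-categoricity is equivalent to finiteness of $S_n(T)$ (the space of types \emph{consistent with} $T$) for all $n$; and (B) \emph{assuming $S_n(T)$ is finite}, orbits coincide with realized types. From ``finitely many orbits'' you do get finitely many \emph{realized} types unconditionally (automorphisms preserve formulas, so orbits surject onto realized types), but you cannot then invoke (B) to conclude that $S_n(T)$ itself is finite, because (B) was proved under the very hypothesis you are trying to reach; your phrase ``finiteness of one is finiteness of the other'' papers over exactly this circularity. The missing lemma is that realized types are \emph{dense} in $S_n(T)$: since $T=\mathrm{Th}(\atoms)$ is complete, every formula consistent with $T$ is satisfied by some tuple of $\atoms$. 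Hence if the realized types are $p_1,\ldots,p_k$ and some $q\in S_n(T)$ differed from all of them, choose $\phi_i\in q$ with $\neg\phi_i\in p_i$ for each $i$; then $\phi_1\wedge\cdots\wedge\phi_k$ belongs to $q$, so it is consistent with $T$, yet any tuple of $\atoms$ satisfying it would have type equal to some $p_j$ while satisfying $\phi_j$ --- a contradiction. (Equivalently: a finite dense subset of a Hausdorff space is the whole space.) With this two-line argument inserted, the chain ``finitely many orbits $\Rightarrow$ finitely many realized types $\Rightarrow$ $S_n(T)$ finite $\Rightarrow$ $\omega$-categorical'' closes. You even flag the realized-versus-consistent issue in your final sentence, but as written you never resolve it, and the right-to-left implication of the theorem is not established.
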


Examples of $\omega$-categorical structures include:
\begin{itemize}
	\item the pure set;
	\item the dense total order $(\mathbb{Q}, \leq)$ of rational numbers;
	\item the universal (random) graph (the Fra\"iss\'e limit of all finite graphs~\cite{Fraissebook});
	\item the universal partial order (the Fra\"iss\'e limit of all finite partial orders);
	\item a countable vector space over a finite field.
	\end{itemize}
	All the structures considered in this paper as atoms are assumed to be $\omega$-categorical.

	\mypar{Effectivity}
We will additionally impose certain computability assumptions on $\atoms$.
First, we fix an encoding of elements of $\atoms$ as strings,
i.e., a surjection from $\set{0,1}^*$ to $\atoms$.
Similarly, we fix an encoding of the symbols in the signature of $\atoms$ as strings.
This allows us to represent first-order formulas, together with valuations of their free variables, as strings.
We then assume that $\atoms$ has a decidable first-order theory,
i.e., there is an algorithm which inputs a first-order formula, together with a valuation of its free variables in $\atoms$, and determines whether the valuation satisfies the formula in $\atoms$. 
We remark that decidability of the first-order theory of $\atoms$ implies (and is equivalent to) the existence of an algorithm which inputs two expressions describing definable sets $x$ and $y$, and decides whether $x=y$.
Lastly, we assume \emph{computability of the Ryll-Nardzewski function},
which maps a given number $n$ to the number of orbits of $\atoms^n$.
We say that a structure $\atoms$ is \emph{effectively $\omega$-categorical}
if it is $\omega$-categorical, has a decidable first-order theory, and its Ryll-Nardzewski function is computable.
Note that effective $\omega$-categoricity implies that there is an algorithm which inputs a number $n$ and 
outputs the orbits of $\atoms^n$, each defined by a first-order formula with $n$ free variables.

\mypar{Automorphisms, partial automorphisms, and self-embeddings}
Automorphisms of $\atoms$ will be called \emph{atom automorphisms}.
For a finite $S\subseteq\atoms$, \emph{atom $S$-automorphisms} are those atom automorphisms $\pi$ which fix all elements of $S$, i.e., $\pi(a) = a$ for every $a\in S$.
If $\pi$ is an atom automorphism and $x$ is a definable set,
defined by an expression $e$ with parameters $a_1,\ldots,a_n$,
then we define the set $\pi x$ as the set defined by the same expression,
and the parameters $\pi(a_1),\ldots,\pi(a_n)$. 

A \emph{partial automorphism}\footnote{also called a 
\emph{partial elementary map}.} of $\atoms$ is a partial bijection $f$ 
between two subsets of $\atoms$, such that 
for every first-order formula $\phi(x_1,\ldots,x_k)$
and every tuple $a_1,\ldots,a_k\in \dom f$,
$\phi(a_1,\ldots,a_k)$ holds in $\atoms$ if, and only if
$\phi(f(a_1),\ldots,f(a_k))$ holds in $\atoms$.
In particular, an automorphism is a partial automorphism whose domain and codomain are $\atoms$.
A \emph{self-embedding} of $\atoms$
is a partial automorphism whose domain is $\atoms$.

A partial automorphism is \emph{finite} if its domain is finite.
The following lemma is a consequence of the theorem of Ryll-Nardzewski, Engeler, and Svenonius, specifically of the fact that two $n$-tuples of atoms are in the same orbit if and only if they satisfy the same first-order formulas. 

\begin{lemma}\label{lem:extend}
	Every finite partial automorphism of an $\omega$-categorical structure extends to an automorphism of that structure.
\end{lemma}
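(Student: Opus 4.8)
The plan is to deduce the statement directly from the Ryll--Nardzewski theorem (Theorem~\ref{thm:RN}), via the characterization of orbits by first-order formulas. Let $f$ be a finite partial automorphism with $\dom f = \set{a_1,\ldots,a_k}$ and image $\set{b_1,\ldots,b_k}$, where $b_i = f(a_i)$. The defining property of a partial automorphism says precisely that the tuples $\bar a = (a_1,\ldots,a_k)$ and $\bar b = (b_1,\ldots,b_k)$ satisfy exactly the same first-order formulas over the signature of $\atoms$. First I would invoke the ``moreover'' clause of Theorem~\ref{thm:RN}: since $\atoms$ is $\omega$-categorical, each orbit of the automorphism group acting on $\atoms^k$ is definable by a first-order formula with $k$ free variables. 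Two tuples therefore lie in the same orbit if and only if they satisfy the same such orbit-defining formula, hence if and only if they satisfy the same first-order formulas altogether.

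From this, the key step is to conclude that $\bar a$ and $\bar b$ lie in the \emph{same} orbit of the automorphism group of $\atoms$ acting on $\atoms^k$: because $\bar a$ and $\bar b$ satisfy the same first-order formulas, and each orbit is cut out by a formula, $\bar b$ must satisfy the formula defining the orbit of $\bar a$, placing it in that very orbit. By definition of orbit, this means there exists an atom automorphism $\pi$ of $\atoms$ with $\pi(\bar a) = \bar b$, i.e.\ $\pi(a_i) = b_i = f(a_i)$ for each $i$. Thus $\pi$ is an automorphism of $\atoms$ extending $f$, which is exactly what is to be shown.

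The only genuine content lies in the biconditional ``same first-order theory of the tuple $\iff$ same orbit'', and this is supplied verbatim by Theorem~\ref{thm:RN} together with the remark, recorded just before the lemma, that two $n$-tuples are in the same orbit precisely when they satisfy the same first-order formulas. I expect the main (and only) obstacle to be purely bookkeeping: unwinding the definition of a partial automorphism to the statement ``$\bar a$ and $\bar b$ satisfy the same formulas,'' and noting that finiteness of $\dom f$ is what guarantees $\bar a$ is a genuine finite tuple to which the theorem applies. No calculation is required; the argument is a direct translation between the language of partial automorphisms and the orbit characterization.
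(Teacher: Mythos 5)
Your proof is correct and follows exactly the route the paper intends: the paper derives Lemma~\ref{lem:extend} from the Ryll--Nardzewski theorem via precisely the fact you establish, namely that tuples satisfying the same first-order formulas lie in the same orbit (using the orbit-definability clause of Theorem~\ref{thm:RN}), so that the defining property of a partial automorphism places $(b_1,\ldots,b_k)$ in the orbit of $(a_1,\ldots,a_k)$ and yields the extending automorphism. The paper leaves these steps implicit, stating only that the lemma is a consequence of that equivalence; your write-up simply makes the same argument explicit.
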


\mypar{Least supports}
We say that a finite set $S\subset\atoms$ \emph{supports} $x$ if every $S$-automorphism $\pi$ fixes $x$, i.e., $\pi x=x$. Note that  any $S$-definable set $x$ is supported by $S$.
We say that $\atoms$ has \emph{least supports}
if every definable set $x$ has a least (under inclusion) support.
On the face of it, admitting least supports is a rather complex condition\footnote{It can be shown to be equivalent to the conjunction of \emph{weak elimination of imaginaries} and \emph{trivial algebraic closure}, which are well-studied notions in model theory.}, in that its formulation relies on the notion of definable sets. However, for $\omega$-categorical structures the existence of least supports has an equivalent structural characterization in terms of $\atoms$ itself, see in~\cite[Thm.~9.3]{BKL14}.
Examples of atoms with least supports include:
\begin{itemize}
	\item  the pure set,
	\item the dense total order $(\mathbb Q,\le)$,
	\item the random graph,
	\item the universal homogeneous partial order.
\end{itemize}
\newcommand{\K}{\mathbb K}
\begin{example}\label{ex:vect}
	An example which does not have least supports is 
	an infinite-dimensional vector space over a finite field.
The idea is that any basis of a finite-dimensional subspace is a support of that subspace,
	but there is no least support.

More precisely, fix a finite field $\K$, e.g. the two-element field.
Let $V$ be a countable vector space over $\K$.
A concrete example  is  
obtained by considering the set of all infinite sequences of elements of $\K$,
with finitely many non-zero elements, with coordinatewise addition and multiplication by scalars.

We treat $V$ as a logical structure $\str V$ equipped with 
the binary addition operation $x,y\mapsto x+y$ and 
unary operations $x\mapsto c\cdot x$, for each scalar $c\in\K$. 
Then, automorphisms of $\str V$ correspond precisely to 
 invertible linear maps of $V$. 
Clearly, the vector space $V$ has countable dimension.
By basic results of linear algebra, any two countable-dimensional
vector spaces over $\K$ are isomorphic. Hence, $\str V$ is an $\omega$-categorical structure.


We argue that the structure $\str  V$, treated as atoms, does not have least supports. 
Fix $k\ge 0$ and fix a $k$-dimensional subspace $W$ of $V$.
Then $W$ is a definable set, as it is finite
(of cardinality $|\K|^k$).


We claim that a finite set $S\subset \atoms$ is a support of $W$
if, and only if, $W$ is contained in the linear span of $S$. 
In one direction, suppose that $W$ is contained in the linear span of $S$.
In particular, every element $w$ of $W$ can be expressed as a linear combination of 
elements of $S$. Then, every linear automorphism of $V$ which fixes $S$ pointwise must fix $w$.
Since this applies to all $w\in W$, $W$ is fixed by all $S$-automorphisms of $\str V$.

Conversely, suppose that $W$ is not contained in the linear span of $S$,
i.e., there is some element $w\in W$ which is not a linear combination of the elements of $S$.
Then there is a linear automorphism of $V$ which is the identity on $S$ and maps $w$ to 
some element $w'$ not in $W$ (we use the fact that $V$ has infinite dimension and $S\cup W$ is finite).

In particular, a set $S\subset \str V$ is a  minimal support of $W$ if and only if
$S$ is a basis of $W$. 
Now, if $k>1$ then every space $W$ of dimension $k$ has more than one basis\footnote{If $e_1,e_2,\ldots,e_k$ is a basis then so is $e_1+e_2,e_2,\ldots,e_k$.
If $k=1$ then $e_1$ and $-e_1$ are distinct bases, unless $\K=2$.
}.
In particular, $W$ has multiple minimal supports, and hence has no least support. 
\end{example}

\mypar{Denseness}
We call the structure $\atoms$  \emph{dense} if 
for all finite sets $T$ and $S$, where $T\subset S\subseteq \atoms$, there is a self-embedding $H$ which fixes $T$ pointwise, such that every automorphism of $H(\atoms)$ extends to an atom $S$-automorphism.
All the structures mentioned above are dense.



\begin{samepage} 
\section{Parameter elimination}
The main property that will ensure the decidability of the definable isomorphism problem  is the following \emph{parameter elimination property}:
\begin{quote}\itshape
If two $T$-definable relational structures $\str A$ and $\str B$ are related by a definable isomorphism, then they are also related by a $T$-definable one.
\end{quote}
\end{samepage}
Roughly speaking, the lemma says that in a definition of an isomorphism between $T$-definable structures, 
one can eliminate parameters outside of $T$, possibly at the price of modifying the isomorphism. The following example shows that even in very simple situations, constructing a $T$-definable isomorphism from a given definable one may be nontrivial.

\begin{example}\label{ex:bijection-smooting}
	Let $\atoms$ be the pure set, put $T=\emptyset$, and consider sets ({\em qua} relational structures over the empty vocabulary):
	\[
		\str A = \str B = \atoms^2 + \atoms.
	\]
	Fix an atom $c\in\atoms$, and define $f:\str A\to \str B$ by:
	\[
		f(a) = (a,c), \qquad f(a,c) = a, \qquad f(a,b)=(a,b)
	\]
	for every $a\in\atoms$ and $b\in\atoms\setminus\{c\}$. This is a $\{c\}$-definable bijection between $\str A$ and $\str B$ which needs to be ``smoothed out'' to yield a $\emptyset$-definable one (in this case, e.g. the identity function). Note that this requires altering $f$ even at some arguments that do not contain $c$ in their supports.
		\end{example}

		To prove  parameter elimination, even for atoms being the pure set, in Section~\ref{sec:elim} we will provide an iterative procedure for ``smoothing out isomorphisms'' by gradual elimination of parameters. This is the main technical result of this paper.
More precisely, we prove: 
\begin{theorem}\label{thm:elim}
	Any dense, $\omega$-categorical structure $\atoms$  with least supports 
	has the parameter elimination property.
\end{theorem}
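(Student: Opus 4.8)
The plan is to establish the \emph{parameter elimination property} by peeling off one parameter at a time. Any definable isomorphism is $S$-definable for some finite $S$, and we may assume $T\subseteq S$, so I would argue by induction on $|S\setminus T|$. The inductive step is the following core claim: if $\str A,\str B$ are $T'$-definable and $h\colon\str A\to\str B$ is a $(T'\cup\{c\})$-definable isomorphism with $c\notin T'$, then there is a $T'$-definable isomorphism. Applying this repeatedly, keeping $\str A,\str B$ fixed and shrinking the support from $S$ toward $T$ by one atom at a time (at each stage the ambient $T'$ still contains $T$, so the structures remain $T'$-definable), eliminates the atoms of $S\setminus T$ one by one and yields a $T$-definable isomorphism. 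Thus it suffices to remove a single parameter, and from now on I write $T$ for $T'$.

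The basic device is the conjugation action of the group $G_T$ of atom $T$-automorphisms. Because $\str A$ and $\str B$ are $T$-definable, every $\pi\in G_T$ restricts to automorphisms of $\str A$ and of $\str B$, so $h^\pi\eqdef \pi\circ h\circ\pi^{-1}$ is again an isomorphism $\str A\to\str B$, now $(T\cup\{\pi(c)\})$-definable. For $c'$ in the $G_T$-orbit of $c$, set $h_{c'}\eqdef h^\pi$ for any $\pi$ with $\pi(c)=c'$; this is well defined because two such $\pi$ differ by a $(T\cup\{c\})$-automorphism, which fixes $h$ since $h$ is $(T\cup\{c\})$-definable. The resulting family $(h_{c'})_{c'}$ is $T$-definable and $G_T$-equivariant. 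Conceptually (by $\omega$-categoricity, Theorem~\ref{thm:RN}) a definable isomorphism is exactly a $G_S$-equivariant isomorphism for some finite $S$, and the task is to descend equivariance from $G_{T\cup\{c\}}$ to $G_T$; the family $(h_{c'})_{c'}$ packages the information needed for this descent.

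To build a single $T$-definable, that is $G_T$-equivariant, isomorphism $g$, I would work orbit by orbit: by Theorem~\ref{thm:RN} the universe $A$ splits into finitely many $G_T$-orbits. For a representative $x$ with least support $U$ (here the \emph{least supports} hypothesis is essential, as it makes ``the part of the value not depending on a fresh atom'' well defined) I examine the images $h_{c'}(x)$ for $c'$ chosen fresh, i.e.\ outside $U\cup T$. If the least support of $h_{c'}(x)$ avoids $c'$, then all fresh choices yield the same element, which can be taken as $g(x)$ and propagated $G_T$-equivariantly over the orbit. The difficulty, already visible in Example~\ref{ex:bijection-smooting}, is that on some orbits \emph{every} image $h_{c'}(x)$ genuinely contains $c'$ in its support, so no value of the family can serve as $g(x)$, and the smoothed isomorphism must differ from $h$ everywhere on such an orbit. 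For these residual orbits I would use the family only as a \emph{certificate} that the multiset of $G_T$-orbit-types occurring in $\str A$ and in $\str B$, together with their relational incidences, coincides, and then construct $g$ on the residual by a back-and-forth matching of orbits of equal type. Completing a finite partial $G_T$-equivariant, relation-preserving assignment to a total one is where \emph{denseness} enters: it supplies a self-embedding fixing $T$ on whose image every symmetry extends to a $T$-automorphism (with finite partial automorphisms extended via Lemma~\ref{lem:extend}), so the matching can be driven to a bijection.

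The main obstacle is exactly this global re-matching on the residual orbits. Unlike the easy orbits, there $g$ bears no pointwise relation to the given $h$, so one cannot merely modify $h$; instead one must show that the orbit-type data extracted from the family is rich enough to guarantee a \emph{simultaneous} $G_T$-equivariant, relation-preserving, total and surjective matching --- a Hall-type feasibility condition whose solvability is underwritten by the existence of $h$ and whose realizability is underwritten by denseness and least supports. Example~\ref{ex:nondefiso} shows that mere abstract isomorphism is insufficient, so the argument must genuinely use that $h$ is $G_{T\cup\{c\}}$-equivariant and not just a bijection. Tracking the relational coupling between distinct orbits while gluing the local choices into one map, and finally verifying that the constructed $g$ is definable (a finite union of orbits) and both preserves and reflects every relation, is the technical heart that I expect to be the hardest part.
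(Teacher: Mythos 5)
Your opening moves are sound: the reduction to eliminating one parameter at a time, the conjugation family $h_{c'}=\pi h\pi^{-1}$ and its $G_T$-equivariance, and the observation that on some orbits the value $h_{c'}(x)$ is independent of the fresh atom are all legitimate (modulo details --- in a general $\omega$-categorical structure the atoms outside $T\cup\supp{x}$ need not form a single orbit over $T\cup\supp{x}$, so ``all fresh choices agree'' already needs an argument, as does the stabilizer condition $\pi x=x\Rightarrow\pi g(x)=g(x)$ required for equivariant propagation even on the ``easy'' orbits). But the proposal stops exactly where the theorem's content begins. On the residual orbits you offer a ``back-and-forth matching of orbits of equal type'' subject to a ``Hall-type feasibility condition'', with no construction of the matching and no proof that any such matching can be made relation-preserving. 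This is a genuine gap, not a deferrable technicality: an isomorphism must preserve relations \emph{across} distinct orbits, and this global relational coupling cannot be recovered from per-orbit type data plus equivariant bijections between matched orbits --- nothing in your certificate controls how the bijection chosen on one residual orbit interacts, through the edge relation, with the choices made on the other orbits. You yourself concede that on residual orbits $g$ ``bears no pointwise relation to the given $h$''; but then the only available source of relational correctness (the given isomorphism) has been discarded, and no substitute is provided.

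The paper's proof supplies precisely the missing mechanism, and it never severs the pointwise link to the given isomorphism $f$. It builds an equivariant bijection $h$ one $\emptyset$-orbit at a time, always choosing among the remaining orbits one of \emph{maximal dimension} (this maximality has no analogue in your proposal, yet it is what makes the equivariant extension well defined, via the equality $\supp{y_l}=\supp{x_0}$ in Claim~\ref{claim:extends}). On a new orbit, the value at an $S$-independent node $x_0$ is not an abstractly matched partner but the endpoint of the alternating chain $y_0=f(x_0)$, $x_1=h_n^{-1}(y_0)$, $y_1=f(x_1),\ldots$, continued until some $y_l$ escapes $\codom(h_n)$ (Claim~\ref{claim:length}); consequently on starting nodes $h$ satisfies $(h^{-1}\circ f)^{l+1}(x)=x$, i.e.~\eqref{eq:kolko}. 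Relation-preservation is then proved by induction on the order in which orbits were added, transporting edges step by step along the two chains using that $f$ is an isomorphism. To complete your argument you would need some device playing this role: a definition of $g$ on the residual orbits expressed in terms of $f$ and the already-constructed part of $g$, so that relational correctness can be propagated, rather than a feasibility condition whose solvability and realizability are asserted but not established.
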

Theorem~\ref{thm:elim} is proved in Section~\ref{sec:elim}.
We now show how the parameter elimination property yields decidability: 
\begin{lemma}\label{lem:to-par-el}
	Suppose that $\atoms$ is effectively $\omega$-categorical 
	and has the parameter elimination property. Then \problem{Definable-isomorphism($\atoms$)} is decidable.
\end{lemma}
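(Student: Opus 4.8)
The plan is to reduce \problem{Definable-isomorphism($\atoms$)} to finitely many instances of the first-order decision problem for $\atoms$, using parameter elimination to control the parameters and $\omega$-categoricity to bound the search. First I would read off from the input a finite set $S\subseteq\atoms$ containing all parameters occurring in the expressions for $\Sigma$, $\str A$, $\str B$, and their interpretation sets $I_r$, so that all of these data are $S$-definable. By the parameter elimination property, $\str A$ and $\str B$ are related by a definable isomorphism if, and only if, they are related by an $S$-definable one. Hence it suffices to decide the existence of an $S$-definable isomorphism $h:A\to B$; this is where parameter elimination does the essential work, replacing an a priori unbounded search over possible parameters by a search among $S$-definable objects only.

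Next I would make the search space finite. The graph $\Gamma(h)\subseteq A\times B$ of an $S$-definable isomorphism is an $S$-definable set, hence supported by $S$, and therefore closed under atom $S$-automorphisms, i.e.\ a union of orbits of the action of atom $S$-automorphisms on $A\times B$. Since $A$ and $B$ are $S$-definable, each has only finitely many $S$-orbits (orbit-finiteness of definable sets over $\omega$-categorical atoms, see Section~\ref{sec:orbitfinite}), and so does their product $A\times B$. Consequently the $S$-definable subsets of $A\times B$ are exactly the finite unions of these orbits: each such union is $S$-definable by the definability of orbits in Theorem~\ref{thm:RN}, and conversely every $S$-definable subset is a union of orbits as just argued. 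There are only finitely many such unions, and by effective $\omega$-categoricity the orbits can be computed, each presented by an explicit expression with parameters in $S$. The algorithm therefore enumerates all finitely many candidate graphs $\Gamma\subseteq A\times B$.

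Finally, for each candidate $\Gamma$ I would check whether it is the graph of an isomorphism $\str A\to\str B$: that $\Gamma$ is total and single-valued on $A$, injective and surjective onto $B$, and that it preserves and reflects every relation of $\Sigma$, the latter phrased via the interpretation sets $I_r$ as the requirement that $(\sigma,a_1,\dots,a_r)\in I_r^{\str A}$ hold exactly when $(\sigma,h(a_1),\dots,h(a_r))\in I_r^{\str B}$. Because $\Sigma$, the universes, the sets $I_r$, and $\Gamma$ are all definable, each of these conditions unfolds (in the manner of Remark~\ref{rem:inter}) into a single first-order sentence over $\atoms$ with parameters in $S$, which is decidable by the assumed decidability of the first-order theory of $\atoms$. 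The algorithm answers ``yes'' precisely when some candidate passes all checks. Correctness follows from the two reductions: a definable isomorphism exists iff an $S$-definable one exists iff one of the finitely many orbit-unions is the graph of an isomorphism.

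I expect the main obstacle to be securing a genuinely \emph{finite} search space rather than a merely semi-decidable procedure. Naively enumerating defining expressions yields only a semi-decision procedure for the positive case and gives no way to certify non-existence. It is precisely the combination of parameter elimination (restricting attention to parameters from $S$) with orbit-finiteness of $S$-definable sets (confining the graph to a finite, computable union of orbits) that closes off the search and thereby makes \emph{non-existence} decidable as well.
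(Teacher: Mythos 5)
Your proposal is correct and follows essentially the same route as the paper: reduce via parameter elimination to the existence of an $S$-definable isomorphism, enumerate the finitely many $S$-definable subsets of $A\times B$ as candidate graphs, and check bijectivity and preservation/reflection of relations by evaluating first-order formulas over $\atoms$. The only difference is presentational: where the paper invokes a cited lemma stating that a $T$-definable set has finitely many effectively computable $T$-definable subsets, you inline its proof via orbit-finiteness and definability of $S$-orbits, which is exactly the standard argument behind that lemma.
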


\begin{proof}
	By the parameter elimination property, \problem{Definable-isomorphism($\atoms$)} reduces to testing whether
given $T$-definable $\Sigma$-structures $\str A$, $\str B$ are related by a $T$-definable isomorphism.
In turn, as we show now, testing of the latter condition reduces to evaluation of first-order formulas in $\atoms$.
Let the given structures be $\str A = (A, I_1, \ldots, I_l)$ and $\str B = (B, J_1, \ldots, J_l)$.
We follow the lines of the proof of Thm.~12 in~\cite{lics2015}; in particular,
we build on the following fact, which follows from effective $\omega$-categoricity of $\atoms$ (cf.~\cite[Lemma~5.27]{atombook}): 
\begin{lemma}\label{lem13}Suppose that $\atoms$ are effectively $\omega$-categorical.
For any finite set $T$ of atoms,
a $T$-definable set $X$ has only finitely many $T$-definable subsets, and expressions defining them can be computed effectively
from an expression defining $X$.
\end{lemma}
%

To verify existence of a $T$-definable isomorphism from $\str A$ to $\str B$, apply Lemma~\ref{lem13} to
$X = A \times B$ and for every $T$-definable subset $R\subseteq A\times B$, test the validity of the first-order
formula
\[
\forall a\in A \ \exists ! b\in B \  R(a,b) \ \wedge \ 
\forall b\in B \ \exists ! a\in A \  R(a,b)
\]
ensuring that $R$ is the graph of a bijection; and for every $i = 1,\ldots, l$, test the validity of the fomula
\begin{align*}
\begin{array}{r}
\forall \sigma \in \Sigma_i \ 
\forall a_1, \ldots, a_i \in A \\  \forall b_1, \ldots, b_i \in B 
\end{array}
\quad
\bigwedge_{1\leq j \leq i} R(a_i, b_i) \implies 
\big(I_i(\sigma, a_1, \ldots, a_i) \iff J_i(\sigma, b_1, \ldots, b_i)\big)
\end{align*}
ensuring that the function is an isomorphism.
Evaluation of first-order formulas of the above form reduces to evaluation of first order formulas in $\atoms$, see~\cite{BT13,lics2015} for further details.
\end{proof}

\medskip


Theorem~\ref{thm:elim} and Lemma~\ref{lem:to-par-el} together prove:
\begin{theorem} \label{thm:decid}
\problem{Definable-isomorphism($\atoms$)} is decidable
whenever $\atoms$ is 
an effective $\omega$-categorical structure which is dense and has least supports.
\end{theorem}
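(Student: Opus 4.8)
The plan is to simply compose the two results already established, since the statement is literally advertised as a consequence of Theorem~\ref{thm:elim} and Lemma~\ref{lem:to-par-el}. First I would observe that the hypotheses of Theorem~\ref{thm:decid} split cleanly across the two feeder statements: density, $\omega$-categoricity, and least supports are exactly what Theorem~\ref{thm:elim} consumes to produce the parameter elimination property, while effective $\omega$-categoricity is what Lemma~\ref{lem:to-par-el} additionally requires on top of that property in order to conclude decidability.

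So the argument would run as follows. Assume $\atoms$ is effectively $\omega$-categorical, dense, and has least supports. Since effective $\omega$-categoricity is by definition a strengthening of plain $\omega$-categoricity, the structure $\atoms$ meets the hypotheses of Theorem~\ref{thm:elim}, and hence enjoys the parameter elimination property. Now I would invoke Lemma~\ref{lem:to-par-el}: the structure $\atoms$ is effectively $\omega$-categorical and has the parameter elimination property, so \problem{Definable-isomorphism($\atoms$)} is decidable. This closes the proof.

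There is no genuine obstacle in this composition step; all the real content has been pushed into Theorem~\ref{thm:elim}, whose proof is deferred to Section~\ref{sec:elim}, and into the reduction inside Lemma~\ref{lem:to-par-el}. The one bookkeeping point worth checking explicitly is that the effectivity hypothesis is used in both directions and not silently dropped: the computability ingredients packaged into effective $\omega$-categoricity (a decidable first-order theory and a computable Ryll-Nardzewski function) are exactly what Lemma~\ref{lem:to-par-el} relies on, via Lemma~\ref{lem13} and the final reduction to first-order satisfiability in $\atoms$, whereas only the non-effective part of that hypothesis feeds Theorem~\ref{thm:elim}. Thus I expect the proof to be a two-line deduction, with the whole technical weight residing in the separately proved parameter elimination theorem.
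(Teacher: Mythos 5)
Your proposal is correct and matches the paper exactly: the paper states Theorem~\ref{thm:decid} as the immediate combination of Theorem~\ref{thm:elim} (which uses denseness, $\omega$-categoricity, and least supports to obtain parameter elimination) and Lemma~\ref{lem:to-par-el} (which uses effective $\omega$-categoricity plus that property to conclude decidability). Your bookkeeping remark about how the effectivity hypothesis is distributed between the two feeder statements is also accurate.
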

%

\begin{remark}\rm
In this paper we consider purely relational signatures, but nothing changes if function symbols are allowed:
the proofs of Theorem~\ref{thm:elim} and Lemma~\ref{lem:to-par-el} (and thus of Theorem~\ref{thm:decid}) 
are still valid when some signature symbols are enforced to be interpreted as functions.
\end{remark}

\section{On the necessity of the assumptions}

Before proving Theorem~\ref{thm:elim},
we review the assumptions it uses: denseness and least supports.
Note that effective $\omega$-categoricity is not used in Theorem~\ref{thm:elim}, but only in the proof of Theorem~\ref{thm:decid}, 
i.e., in Lemma~\ref{lem:to-par-el} (we do, however, always assume $\omega$-categoricity).
\medskip



An essential assumption that makes Theorem~\ref{thm:elim} go through is the denseness of $\atoms$. Dropping it would invalidate the lemma, in view of the following counterexample:

\begin{example}\label{ex:circle}
Let $\atoms$ be the set of rational numbers with the so-called {\em cyclic order} relation, which is a ternary relation $R$ defined by:
\[
	R(a,b,c) \quad\text{if and only if}\quad a<b<c \quad \text{or} \quad b<c<a \quad\text{or}\quad c<a<b.
\]
(Note that the binary order relation itself is not in the vocabulary of $\atoms$.) These atoms can be visualized as densely distributed on an oriented circle, so that one cannot say whether one atom is ``smaller'' or ``greater'' than another, but one can say whether three atoms $a,b,c$ follow each other in the clockwise direction:
\begin{equation}\label{eq:circle}
\xy 
{\ellipse<30pt,30pt>{-}};
p+(-10.5,-.5)*\dir{*}="o",*+!UR{a};
p+(4.5,9.0)*\dir{*}="o",*+!DR{b};
p+(15.3,-4.0)*\dir{*}="o",*+!DL{c};
p+(-0.4,-10.0)*\dir{*}="o",*+!UL{d};
\endxy
\qquad
\begin{array}{ll}
\text{e.g.} \\
& (a,b,c)\in R,  (b,d,a) \in R \\
& (a,c,b)\not\in R, (b,d,c)\not\in R
\end{array}
\end{equation}
For example, in the drawing above, there is an atom automorphism that maps the ordered pair $(a,b)$ to $(c,d)$ (that is, it maps $a$ to $b$ and $c$ to $d$), there is also one that maps $(a,b)$ to $(d,c)$ and one that maps $(a,b,c)$ to $(b,c,d)$, but there is no atom automorphism that maps $(a,b,c)$ to $(a,d,c)$.

This structure of atoms is effectively $\omega$-categorical
(in fact, it is definable in $(\mathbb Q,\le)$).
 It is also not difficult to check, using the criterion of~\cite[Thm.~9.3]{BKL14}, that is admits least supports. However, it is not dense. To see this, fix some $a\in\atoms$ and put $T = \emptyset$ and $S=\{a\}$. Let $H$ be an embedding of $\atoms$ into $\atoms$ that avoids $a$, and pick some $b\neq c$ in $H(\atoms)$, i.e. the image of $H$. Since $H(\atoms)$ is isomorphic to $\atoms$, there is some automorphism of it that swaps $b$ and $c$. This automorphism does not extend to any $\{a\}$-automorphism of $\atoms$, since no $\{a\}$-automorphism can swap $b$ and $c$. 

We shall now show that parameter elimination fails for the structure $\atoms$. Put $T=\emptyset$, and define a directed graph $\str A$  so that:
\begin{itemize}
\item the set of vertices is the relation $R$; more explicitly, vertices are ordered triples $(a,b,c)\in\atoms^3$ such that $(a,b,c)\in R$, 
\item from each vertex $(a,b,c)$ there is exactly one directed edge, ending in the vertex $(b,c,a)$.
\end{itemize}
As a directed graph, $\str A$ is an infinite disjoint family of directed triangles.

Furthermore, define a graph $\str B$ as obtained from $\str A$ by inverting all arrows; that is, from each vertex $(a,b,c)$ in $\str B$ there is exactly one directed edge, ending in the vertex $(c,a,b)$.

Clearly, $\str A$ and $\str B$ are isomorphic as graphs. However, no $\emptyset$-definable isomorphism between them exists. Indeed, such an automorphism $\pi$ would have to map a vertex $(a,b,c)$ to one of three candidates: $(a,b,c)$, $(b,c,a)$ or $(c,a,b)$. If the first is chosen then, by $\emptyset$-definability, $\pi$ has to map $(b,c,a)$ to $(b,c,a)$, but on the other hand, since $\pi$ must preserve edges, it must map $(b,c,a)$ to $(c,a,b)$, which is a contradiction. The other two candidates for $\pi(a,b,c)$ are excluded by analogous arguments.

On the other hand, for any fixed atom $d\in\atoms$, a $\{d\}$-definable isomorphism $\pi:\str A\to\str B$ exists and is defined as follows. For any set $\{a,b,c\}\subseteq \atoms$, we shall define a $\{d\}$-definable isomorphism between directed graphs
\[
\vcenter{\xymatrix{
\save[]+<30pt,0pt>*{(a,b,c)}\ar[dr]\ar@{<-}[d]\restore \\
(c,a,b) & (b,c,a)\ar[l]
}}
\qquad\text{and}\qquad
\vcenter{\xymatrix{
\save[]+<30pt,0pt>*{(a,b,c)}\ar[dr]\ar@{<-}[d]\restore \\
(b,c,a) & (c,a,b),\ar[l]
}}
\]
which are fragments of $\str A$ and $\str B$, respectively.

First, consider the case where $d\not\in\{a,b,c\}$. Then exactly one of the triples
\[
	(a,d,b), (b,d,c), (c,d,a)
\]
belongs to the relation $R$. (This becomes clear by looking at the drawing~\eqref{eq:circle} above, where $(c,d,a)\in R$.) One of the triples $(a,b,c)$, $(b,c,a)$ and $(c,a,b)$ is thus singled out by $d$, and the isomorphism of the above triangles that fixes that triple and swaps the other two, is $\{d\}$-definable.

If on the other hand $d\in\{a,b,c\}$, then one of the triples $(a,b,c)$, $(b,c,a)$ and $(c,a,b)$ is even more directly singled out by $d$, and a $\{d\}$-definable automorphism is defined as before.

Altogether we have constructed $\emptyset$-definable graphs $\str A$ and $\str B$ that are related by a definable isomorphism, but not by a $\emptyset$-definable one.
\qed
\end{example}

We now discuss the necessity of the assumption on least supports.
Removing this assumption also causes Theorem~\ref{thm:elim} to fail,
as witnessed below.
\begin{example}
	Fix a finite field $\K$.
	Let $\str A$ be a countable affine space over $\K$.
	This can be defined for example as follows.

	Let $V$ be a countable vector space over $\K$, cf. Example~\ref{ex:vect}.
	Intuitively, $\str A$ is the vector space $V$ with the origin point forgotten. Formally, for each scalar $c\in\K$, let
	$R_c\subset V^4$ consist of all tuples $(a_1,a_2,b_1,b_2)\in V^4$ with $a_1-a_2=c\cdot (b_1-b_2)$.
	We call the structure $\str A$ with domain $V$ and relations $R_c$, for $c\in\K$, the \emph{affine space} modelled on $V$.
	The fundamental property of $\str A$ is that 
	its automorphisms are the \emph{affine automorphisms}, i.e.,  the permutations  of $V$ which are of the form $a\mapsto b_0+ \pi(a-a_0)$, for some $a_0,b_0\in \str A$ and some linear automorphism $\pi$ of $V$.

	Since $\str A$ is definable in terms of the $\omega$-categorical structure 
	$\str V$,  by Theorem~\ref{thm:RN}, $\str A$ itself is $\omega$-categorical\footnote{Indeed, all automorphisms of $\str V$ remain automorphisms of $\str A$, so $\str A^n$ has at most as many orbits under the action of affinie automorphisms as $\str V^n$ has under the action of linear automorphisms, i.e., finitely many.}.
It is also not difficult to check that the structure $\str A$ is dense.

By a similar argument as in Example~\ref{ex:vect}, the structure $\str A$ does not have least supports. 
Indeed, any affine line contained in the affine space is supported by any two of its points.
(For the special case $\K=\mathbb{Z}_2$ this is not a counterexample yet, as an affine line in this case consists of only two points. In this case, a counterexample is obtained by looking at planes: a two-dimensional plane is supported by any three of its points.)

We now show that the structure $\str A$, treated as atoms, does not have parameter elimination. To this end, we show two $\emptyset$-definable sets over $\str A$ such that there is a definable bijection between them, but no $\emptyset$-definable bijection. 

The first set is the domain $A$ of $\str A$ itself.
The second set $B$ is essentially (an isomorphic copy of) the vector space $V$ underlying $\str A$.
Formally, it is defined as the set of equivalence classes of pairs $(a,b)\in A^2$, where $(a,b)$ is equivalent to $(a',b')$ iff $b-a=b'-a'$. This equivalence relation is clearly definable. The set $B$ has a $\emptyset$-definable element $0$, namely the equivalence class of a pair $(a,a)$.
The set $A$ has no $\emptyset$-definable point, since 
such a point would be invariant under all automorphisms of $\str A$,
and affine automorphisms act transitively on $A$.  
As any $\emptyset$-definable bijection  $f\from B\to A$ would map 
the $\emptyset$-definable point $0\in B$ to a $\emptyset$-definable point $f(0)\in A$,
no such bijection can exist.

On the other hand, for any fixed $a\in A$, 
there is a $\set{a}$-definable bijection between $A$ and $B$, 
namely the function mapping $b\in A$ to the equivalence class of the pair $(a,b)$.\qed

\end{example}

\section{Definable sets via the action of atom automorphisms} \label{sec:orbitfinite}

For the proof of Theorem~\ref{thm:elim} it will be more convenient to take a different perspective on definable sets,
namely via the action of atom automorphisms. This view emphasises that definable sets are always \emph{orbit-finite}
(cf.~Lemma~\ref{lem:orbitfinite} below). 
In this section we provide the necessary definitions and properties that will be useful
in the proof of Theorem~\ref{thm:elim}  in the next section.
All further missing details can be found in~\cite{atombook}.

Definable sets contain, as elements, either other definable sets or atoms $a\in \atoms$.
The group of atom automorphisms acts naturally on such sets, by renaming all atoms appearing as elements, as elements of elements, etc.
The action preserves definable sets: a definable set is mapped to a definable set.
By $\pi x$ we denote the result of the action of an atom automorphism $\pi$ on a definable set $x$.
For instance, consider the pure set $\atoms$ as atoms and the atom automorphism $\pi$ 
that swaps $\atom 0$ with $\atom 1$, and $\atom 3$ with $\atom 4$, and preserves
all other atoms. Then 
\[ \pi   \setof{a}{a\in\atoms,a\neq \atom 1\land a\neq \atom 2} 
\  = \ 
 \setof{a}{a\in\atoms,a\neq \atom 0\land a\neq \atom 2}
 \quad
 \pi  \{\atom 0, \atom 1, \atom 2\} \  = \  \{\atom 0, \atom 1, \atom 2\}. 
 \]
The action defines a partition of all definable sets into \emph{orbits}: $x$ and $x'$ are in the same orbit if $\pi x = x'$ for some atom automorphism $\pi$.
In the same vein, for every finite set $S\subseteq \atoms$, 
the action of the subgroup of atom $S$-automorphisms defines a finer partition of all definable sets into \emph{$S$-orbits} 
($\emptyset$-orbits are just orbits).
In particular, the set $\atoms$ of atoms is itself also partitioned in $S$-orbits.
%

%

By inspecting the syntactic form of definable sets, one easily verifies the following basic fact:
\begin{lemma} \label{lem:closed}
Every $S$-definable set is closed under the action of atom $S$-automorphisms on its elements, i.e., 
is a union of $S$-orbits.
\end{lemma}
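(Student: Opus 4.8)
The plan is to prove Lemma~\ref{lem:closed} directly from the inductive syntactic definition of definable sets, showing that applying an atom $S$-automorphism to any $S$-definable set yields the same set. The key observation is that an atom $S$-automorphism $\pi$ fixes every element of $S$ pointwise, hence it fixes all the parameters occurring in any expression defining an $S$-definable set.

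First I would recall the setup. Let $x$ be an $S$-definable set, defined by an expression $e$ together with a valuation $\val\colon V\to\atoms$ whose image is contained in $S$ (using the parameter conventions, all parameters of $e$ lie in $S$). Recall from the preliminaries that for an atom automorphism $\pi$, the set $\pi x$ is the set defined by the \emph{same} expression $e$ with the parameters replaced by their images under $\pi$. Now if $\pi$ is an atom $S$-automorphism, then $\pi$ fixes every element of $S$, and in particular fixes every parameter of $e$ (since all parameters lie in $S$). Therefore the expression obtained after replacing each parameter $a_i$ by $\pi(a_i)=a_i$ is literally the same expression $e$ with the same parameters, so $\pi x = x$. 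This already shows that every atom $S$-automorphism fixes $x$, which is the first half of the claim.

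For the second half, I would unwind what it means for $x$ to be a union of $S$-orbits. An $S$-orbit of a definable set $y$ is the set $\setof{\pi y}{\pi \text{ is an atom } S\text{-automorphism}}$. To show $x$ is a union of $S$-orbits, it suffices to show that whenever $y\in x$ and $\pi$ is an atom $S$-automorphism, then $\pi y\in x$ as well. This follows immediately from the first part together with the fact that the action of $\pi$ on sets commutes with membership: since $\pi$ acts by renaming atoms throughout the hereditary structure, $y\in x$ implies $\pi y\in \pi x$, and by the first half $\pi x = x$, so $\pi y\in x$. Thus every element of $x$ carries its whole $S$-orbit inside $x$, which is exactly the statement that $x$ is a union of $S$-orbits.

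I do not expect any serious obstacle here; the lemma is essentially a bookkeeping statement about how the automorphism action interacts with the syntax of defining expressions. The only point requiring mild care is making the induction on the structure of $e$ explicit, to justify that $\pi$ acting on the \emph{value} $x=e[\val]$ coincides with taking the value of $e$ under the $\pi$-shifted valuation, and that membership is preserved by the action. Both facts are immediate from the recursive definition of $e[\val]$: for a set-builder expression the value is a set whose elements are themselves values of subexpressions, and $\pi$ distributes over this construction. Once these routine compatibility facts are in place, the two displayed implications above give the result.
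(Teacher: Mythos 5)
Your proof is correct and follows essentially the same route the paper intends: the paper gives no detailed argument for Lemma~\ref{lem:closed}, merely noting that it follows ``by inspecting the syntactic form of definable sets,'' and your argument is the natural elaboration of exactly that inspection (parameters in $S$ are fixed by atom $S$-automorphisms, so $\pi x = x$, and the action commutes with membership by induction on expressions). The only detail worth making explicit in the deferred induction is that the bijection $\val' \mapsto \pi\circ\val'$ between extended valuations preserves satisfaction of the first-order guards $\phi$ precisely because $\pi$ is an automorphism of $\atoms$; this is the one place the automorphism hypothesis, rather than mere bijectivity, is used.
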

We will intensively use the following consequence of Lemma~\ref{lem:closed}:
\begin{lemma} \label{lem:equivfunc1}
Every $S$-definable function $h$ commutes with atom $S$-automorphisms: for every atom $S$-automorphism $\pi$, 
$
h\pi = \pi h.
$
\end{lemma}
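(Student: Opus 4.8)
The plan is to derive the statement directly from the fact that the graph $\Gamma(h)$ is $S$-definable, hence a union of $S$-orbits by Lemma~\ref{lem:closed}. First I would record the one auxiliary observation that the automorphism action commutes with pairing: $\pi(x,y)=(\pi x,\pi y)$ for every atom automorphism $\pi$ and all definable sets $x,y$. This is immediate from the Kuratowski encoding $(x,y)=\{\{x,y\},\{x\}\}$ (cf.~Remark~\ref{rem:conventions}) together with the inductive definition of the action as renaming of the atoms occurring throughout a set.

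Then I would fix an atom $S$-automorphism $\pi$ and an arbitrary $x\in\dom h$, so that $(x,h(x))\in\Gamma(h)$. Since $\Gamma(h)$ is $S$-definable, Lemma~\ref{lem:closed} makes it closed under atom $S$-automorphisms, whence $\pi(x,h(x))\in\Gamma(h)$. By the pairing identity this reads $(\pi x,\pi h(x))\in\Gamma(h)$. In particular $\pi x\in\dom h$, and since $h$ is a function its graph contains exactly one pair with first coordinate $\pi x$; therefore $h(\pi x)=\pi h(x)$.

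As $x$ ranges over the whole domain, this yields the identity of functions $h\pi=\pi h$. No real obstacle arises here: the entire content is packaged in Lemma~\ref{lem:closed}, and the only step demanding a moment's care is the compatibility of the action with the pair encoding, which is a routine unwinding of definitions. It is worth noting that the same argument applies verbatim to any $S$-definable relation, giving that $S$-definable relations are preserved by atom $S$-automorphisms; the functional case merely uses uniqueness of the image to turn preservation of the graph into the equivariance equation $h(\pi x)=\pi h(x)$.
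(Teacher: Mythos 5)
Your proof is correct and follows exactly the route the paper intends: the paper states Lemma~\ref{lem:equivfunc1} as a direct consequence of Lemma~\ref{lem:closed}, and your argument (apply closure of the $S$-definable graph $\Gamma(h)$ under atom $S$-automorphisms, use compatibility of the action with pairing, then use functionality to get $h(\pi x)=\pi h(x)$) is precisely the unwinding of that claim. The only detail worth making explicit is that $\dom(h)$ is itself a union of $S$-orbits, so $\pi$ and $\pi^{-1}$ both preserve it and the two composites have equal domains.
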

%
It is not difficult to prove, by induction on the structure of set-builder expressions, that $\omega$-categoricity of $\atoms$ guarantees finiteness in the statement of Lemma~\ref{lem:closed}:
\begin{lemma} \label{lem:orbitfinite}
Assume that the structure of atoms is $\omega$-categorical. Then
every $S$-definable set is a finite union of $S$-orbits.
\end{lemma}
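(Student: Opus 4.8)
The plan is to argue by structural induction on the defining expression, the crux being a translation of orbit-finiteness into Ryll--Nardzewski's theorem applied to an expansion of $\atoms$ by constants. First I would record the fundamental equivariance of evaluation: for every atom automorphism $\pi$, every expression $e$, and every valuation $\val$ of its free variables, one has $\pi(e[\val]) = e[\pi\circ\val]$. This is exactly the recipe by which the action $\pi x$ was defined (rename the parameters by $\pi$), and it is verified by a routine induction on the structure of $e$ --- the variable case is immediate, and the set-builder and finite-union cases propagate the identity through the bound variables and the (renamed) guard $\phi$. Specializing $\pi$ to an atom $S$-automorphism, which fixes $S$ pointwise, this yields the $S$-equivariance that drives the rest of the argument.

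Since a finite union of sets, each a finite union of $S$-orbits, is again a finite union of $S$-orbits, it suffices to treat a single set-builder expression. So fix an $S$-definable set $X = \setof{e'}{a_1,\ldots,a_n\in\atoms, \phi}[\val]$ with $\val(V)\subseteq S$. Its elements are precisely the values $F(b_1,\ldots,b_n) \eqdef e'[\val[a_1\mapsto b_1,\ldots,a_n\mapsto b_n]]$, as $(b_1,\ldots,b_n)$ ranges over the set $D\subseteq\atoms^n$ of tuples satisfying $\phi$ (with the $S$-parameters of $\val$ substituted). By the equivariance above, together with the fact that $S$-automorphisms fix $S$, the map $F$ is $S$-equivariant, i.e. $\pi\,F(b_1,\ldots,b_n)=F(\pi b_1,\ldots,\pi b_n)$ for every atom $S$-automorphism $\pi$.

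It then remains to see that $D$ meets only finitely many $S$-orbits of $\atoms^n$, for then its $S$-equivariant image --- which is exactly the set of elements of $X$ --- is a union of at most that many $S$-orbits, hence of finitely many. This is where I would invoke Theorem~\ref{thm:RN}: expanding $\atoms$ by finitely many constants naming the elements of $S$ yields a structure that is again $\omega$-categorical and whose automorphisms are precisely the atom $S$-automorphisms; by Ryll--Nardzewski it has finitely many orbits on $n$-tuples, so $\atoms^n$ is carved into finitely many $S$-orbits. That $D$ is a union of $S$-orbits (rather than a partial one) follows, as in Lemma~\ref{lem:closed}, from $\phi$ having parameters only in $S$. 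Summing over the finitely many set-builder summands of the original expression concludes the induction.

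The step I expect to be the main obstacle is this reduction to Ryll--Nardzewski, namely checking that passing to the expansion of $\atoms$ by constants for $S$ both preserves $\omega$-categoricity and identifies its automorphism group with the group of atom $S$-automorphisms; once this is in place, the orbit count of $\atoms^n$ is finite and the equivariant-image argument is purely formal. By contrast, the equivariance of evaluation is bookkeeping that is essentially forced by the definition of the action.
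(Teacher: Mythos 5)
Your proof is correct and follows essentially the same route the paper indicates: the paper offers no detailed proof of Lemma~\ref{lem:orbitfinite}, only the remark that it follows by induction on the structure of set-builder expressions from $\omega$-categoricity (deferring details to the cited literature), and your argument---equivariance of evaluation, reduction to a single set-builder expression, and Ryll--Nardzewski applied to the expansion of $\atoms$ by constants for $S$ to bound the number of $S$-orbits of $\atoms^n$---is exactly a sound filling-in of that sketch.
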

%
%
For an $S$-orbit $O$ and an $S$-definable set $X$, if $O \subseteq X$ we say that $O$ is an $S$-orbit \emph{inside} $X$; on the other hand,
whenever $x\in O$ we say that $O$ is \emph{the $S$-orbit of} $x$, and write $O = \orbit{S}{x}$.
The converse of Lemma~\ref{lem:equivfunc1} is also true when $\atoms$ is $\omega$-categorical,
as follows from Theorem~\ref{thm:RN}:
\begin{lemma} \label{lem:equivfunc2}
Assume that the structure of atoms is $\omega$-categorical. Then
every function $h$ that commutes with atom $S$-automorphisms, with $\dom(h)$ and $\codom(h)$ being $S$-definable, is itself $S$-definable.
\end{lemma}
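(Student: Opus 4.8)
The plan is to show that the graph $\Gamma(h) \subseteq X \times Y$ is $S$-definable, where $X = \dom(h)$ and $Y = \codom(h)$. Since $X$ and $Y$ are $S$-definable and definable sets are closed under Cartesian products, the ambient set $X \times Y$ is $S$-definable, and by Lemma~\ref{lem:orbitfinite} it decomposes into finitely many $S$-orbits. The whole argument then rests on two observations: first, that $\Gamma(h)$ is a union of some of these $S$-orbits, and second, that any union of $S$-orbits inside an $S$-definable set is itself $S$-definable.

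First I would verify that $\Gamma(h)$ is closed under the action of atom $S$-automorphisms. Take $(x, h(x)) \in \Gamma(h)$ and any atom $S$-automorphism $\pi$. Its action on the pair gives $\pi(x, h(x)) = (\pi x, \pi h(x))$; using the hypothesis $h\pi = \pi h$, this equals $(\pi x, h(\pi x))$, which again lies in $\Gamma(h)$. Hence $\Gamma(h)$ is invariant under all atom $S$-automorphisms, i.e.\ it is a union of $S$-orbits. As it is contained in $X \times Y$, which by Lemma~\ref{lem:orbitfinite} meets only finitely many $S$-orbits, $\Gamma(h)$ is a \emph{finite} union of $S$-orbits.

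It remains to argue that a finite union of $S$-orbits inside an $S$-definable set is $S$-definable; it suffices to treat a single $S$-orbit $O = \orbit{S}{z}$ and then close under finite unions. This is the step where $\omega$-categoricity is genuinely used, and I expect it to be the crux (it is the converse direction of Lemma~\ref{lem:closed}). The element $z$ is $S'$-definable for some finite $S' \supseteq S$, say defined by an expression $e$ whose parameters form a tuple $\bar a \in \atoms^k$. Applying Theorem~\ref{thm:RN} to the structure $\atoms$ expanded by constants naming the elements of $S$ (which remains $\omega$-categorical), the $S$-orbit of $\bar a$ is cut out by a first-order formula $\psi(x_1, \ldots, x_k)$ with parameters in $S$. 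I would then show that $O = \setof{e[\bar a']}{\bar a' \in \atoms^k,\ \psi(\bar a')}$, which is manifestly $S$-definable.

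The inclusion $\supseteq$ is immediate; for $\subseteq$ one uses Lemma~\ref{lem:extend}: if $\bar a'$ satisfies $\psi$, then $\bar a'$ lies in the same $S$-orbit as $\bar a$, so the finite partial map sending $\bar a$ to $\bar a'$ is a partial $S$-automorphism and extends to a genuine atom $S$-automorphism $\pi$, whence $e[\bar a'] = \pi z \in O$; conversely every element of $O$ has the form $\pi z = e[\pi\bar a]$ with $\pi\bar a$ satisfying $\psi$. Closing under the finitely many $S$-orbits composing $\Gamma(h)$ yields an $S$-definable expression for $\Gamma(h)$, which shows that $h$ is $S$-definable and completes the proof.
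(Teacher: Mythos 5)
Your proof is correct and follows exactly the route the paper intends: the paper gives no detailed proof of Lemma~\ref{lem:equivfunc2}, stating only that it ``follows from Theorem~\ref{thm:RN}'', and your argument---invariance of $\Gamma(h)$ under atom $S$-automorphisms, finiteness of the $S$-orbit decomposition via Lemma~\ref{lem:orbitfinite}, and $S$-definability of each $S$-orbit by applying Theorem~\ref{thm:RN} to the expansion of $\atoms$ by constants for $S$---is precisely the standard elaboration of that hint. No gaps; the key step (first-order definability of $S$-orbits of parameter tuples, transferred to orbits of definable sets through the defining expression) is handled correctly in both directions.
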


\begin{example} \label{ex:cont}
Lemma~\ref{lem:equivfunc1} can be used to prove the claim formulated in Example~\ref{ex:nondefiso} in Section~\ref{sec:atoms}:
there is no definable bijection between the following two sets
\begin{align*}
V_1 &=  \setof{a}{a\in\atoms} = \atoms & V_2&=\setof{\set{a,b}}{a,b\in\atoms, a\neq b}.
\end{align*}
Suppose the contrary, and let $f : V_1 \to V_2$ be an $S$-definable bijection.  
Consider any $a, b, c\in \atoms - S$ with $f(a) = \set{b, c}$, and any atom $S$-automorphism $\pi$ that swaps 
$b$ and $c$ and does not preserve $a$, say $\pi(a) = a'$.
Note that such $\pi$ always exists, e.g., when $a=b$ then $a'=c$.
By Lemma~\ref{lem:equivfunc1} we obtain:
$
f(a') = f \pi (a) = \pi f(a) = \pi \set{b,c} = \set{b, c} = f(a),
$
which is in contradiction with bijectivity of $f$.
\qed
\end{example}

We will also need some basic properties involving least  supports.
Recall that we assume that the structure $\atoms$ has least supports, and that
$\supp x\subset \atoms$ denotes the least support of $x$.
First, we observe that the support function commutes with atom automorphisms.

\begin{lemma} \label{lem:eqsup}
\label{lem:supp}
For every definable set $x$ and atom automorphism $\pi$, we have $\supp{\pi x} = \pi (\supp{x})$.
\end{lemma}
%
%
The cardinality of $\supp{x}$ will be  called the \emph{dimension} of $x$.
\begin{corollary} \label{cor:dim}
Every two elements in the same $\emptyset$-orbit have supports of  same dimension.
\end{corollary}
%
Moreover, the action of an atom automorphism $\pi$ on $x$ depends only on the restriction of $\pi$ to
$\supp{x}$:
\begin{lemma} \label{lem:eqaction}
If atom automorphisms $\pi, \pi'$ coincide on $\supp{x}$, then $\pi x = \pi' x$.
\end{lemma}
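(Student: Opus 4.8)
The plan is to reduce the statement to the defining property of supports by passing to a single composite automorphism. First I would form $\rho = \pi'^{-1}\pi$ and check that it fixes $\supp{x}$ pointwise: for every $a\in\supp{x}$ we have $\rho(a)=\pi'^{-1}(\pi(a))=\pi'^{-1}(\pi'(a))=a$, using precisely the hypothesis that $\pi$ and $\pi'$ agree on $\supp{x}$. Hence $\rho$ is an atom $\supp{x}$-automorphism.

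Next I would invoke the fact that $\supp{x}$ is, in particular, a support of $x$ (it is by definition the least support, hence a support). By the very definition of ``support'', every atom $\supp{x}$-automorphism fixes $x$, so applying this to $\rho$ gives $\rho x = x$. Since the action of atom automorphisms on definable sets is a genuine group action, we have $(\pi'^{-1}\pi)x = \pi'^{-1}(\pi x)$, so the equation $\rho x = x$ reads $\pi'^{-1}(\pi x)=x$. Applying $\pi'$ to both sides and cancelling $\pi'\pi'^{-1}=\id$ on the left yields $\pi x = \pi' x$, which is exactly what we want.

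This is really a one-line reduction, so there is no genuine obstacle; the only points requiring care are cosmetic. One must respect the order of composition in the group action (so that $(\pi'^{-1}\pi)x$ unfolds as $\pi'^{-1}(\pi x)$ rather than the reverse), and one should note that the argument uses only that $\supp{x}$ \emph{supports} $x$, not its minimality — minimality of the least support plays no role here, it merely guarantees that $\supp{x}$ is well defined under our standing assumption that $\atoms$ has least supports.
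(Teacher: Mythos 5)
Your proof is correct, and it is the standard argument: reduce to the composite $\pi'^{-1}\pi$, observe it is a $\supp{x}$-automorphism, and apply the definition of support together with the fact that the action is a group action. The paper itself states this lemma without proof (deferring such basic facts to its cited reference), and your argument is exactly the one being implicitly invoked, including the correct observation that only the support property, not minimality, is used.
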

We observe the relationship between the least support of a function, its argument and value:
\begin{lemma} \label{lem:supfun}
Let $f$ be a definable function and let $a\in\dom(f)$. Then $\supp{f(x)} \subseteq \supp f \cup \supp x$.
\end{lemma}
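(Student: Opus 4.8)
The plan is to invoke the defining property of least supports directly. Since $\supp{f(x)}$ is by hypothesis the \emph{least} support of $f(x)$, in order to prove the inclusion it suffices to show that the set $S := \supp{f} \cup \supp{x}$ is \emph{some} support of $f(x)$; minimality of $\supp{f(x)}$ then forces $\supp{f(x)} \subseteq S$. Thus the entire statement reduces to the following claim: every atom $S$-automorphism $\pi$ fixes $f(x)$, i.e.\ $\pi(f(x)) = f(x)$.

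First I would fix an arbitrary atom $S$-automorphism $\pi$. Because $\pi$ fixes $S$ pointwise and $\supp{x}, \supp{f} \subseteq S$, the map $\pi$ is simultaneously an $\supp{x}$-automorphism and an $\supp{f}$-automorphism. From the former, together with the fact that $\supp{x}$ supports $x$, I obtain $\pi x = x$; in particular $\pi x \in \dom(f)$, so $f(\pi x)$ makes sense. From the latter, together with the fact that $\supp{f}$ supports $f$ (as the definable set $\Gamma(f)$), I obtain $\pi f = f$, and unfolding this equality of graphs yields the commutation $f(\pi y) = \pi(f(y))$ for every $y \in \dom(f)$; this is exactly the content of Lemma~\ref{lem:equivfunc1} applied with parameter set $\supp{f}$.

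Combining the two observations, I would simply compute $\pi(f(x)) = f(\pi x) = f(x)$, where the first equality is the commutation and the second uses $\pi x = x$. Hence $\pi$ fixes $f(x)$; as $\pi$ was an arbitrary $S$-automorphism, $S$ supports $f(x)$, and the inclusion $\supp{f(x)} \subseteq \supp{f} \cup \supp{x}$ follows. I do not anticipate a real obstacle: the argument is a two-line computation once the setup is in place. The only point deserving care is the passage from ``$\supp{f}$ supports $f$'' (an equality $\pi f = f$ of the \emph{graph}) to the pointwise commutation. Since Lemma~\ref{lem:equivfunc1} is literally stated for $\supp{f}$-\emph{definable} functions rather than merely $\supp{f}$-\emph{supported} ones, I would either note that a function supported by its least support is definable over it, or---more cleanly---just unfold $\pi\,\Gamma(f) = \Gamma(f)$ directly, which immediately gives $f(\pi y) = \pi(f(y))$ without relying on that lemma's precise hypotheses.
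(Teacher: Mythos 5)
Your proof is correct: showing that $\supp{f} \cup \supp{x}$ supports $f(x)$ by combining $\pi x = x$ with the unfolding of $\pi\,\Gamma(f)=\Gamma(f)$, and then invoking minimality of the least support, is exactly the argument this lemma requires. Note that the paper itself states Lemma~\ref{lem:supfun} without proof (such basic facts are deferred to the cited reference), so there is no in-paper proof to diverge from; your closing choice---unfolding the graph equality directly rather than citing Lemma~\ref{lem:equivfunc1}, whose hypothesis is $S$-definability rather than $S$-support---is the right one, since it keeps the argument self-contained and avoids any appeal to $\omega$-categoricity or to the fact that supported sets are definable.
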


    





\section{Proof of Theorem~\ref{thm:elim}}  \label{sec:elim}

In this section we prove Theorem~\ref{thm:elim}.
%
%
Consider two $T$-definable structures $\str A$ and $\str B$ related by an $S$-definable isomorphism $f$ and assume,
w.l.o.g., that $S \supseteq T$. 
We are going to modify suitably the isomorphism in order to obtain a possibly different one which will be $T$-definable. 
For the sake of readability we elaborate the proof in a special but crucial case, under the following assumptions:
\begin{itemize}
\item $T = \emptyset$, 
\item the signature contains just one binary symbol.
\end{itemize}
Thus we assume the structures $\str A$, $\str B$ to be $\emptyset$-definable directed graphs.
The proof adapts easily to the general case, as discussed at the end of this section.

Consider therefore two $\emptyset$-definable directed graphs $\str A = (A, E)$ and $\str B = (B, F)$, 
where $A$ and $B$ are sets of nodes, and $E\subseteq A\times A$ and $F \subseteq B\times B$ are sets of directed edges,
together with an $S$-definable isomorphism $f : A\to B$ of graphs. 
Assume w.l.o.g.~that the node sets $A$ and $B$ are disjoint.
By Lemma~\ref{lem:orbitfinite},  
the set $A$ of nodes of $\str A$, being itself $\emptyset$-definable, splits into finitely many $\emptyset$-orbits.
Likewise the set $B$ of nodes of $\str B$ splits into finitely many $\emptyset$-orbits,
but a priori it is not clear whether the numbers of $\emptyset$-orbits inside $A$ and $B$ are equal. 
As a side conclusion of our proof, it will be made clear that they really are.
%

We apply denseness of $\atoms$.
Fix in the sequel an embedding $H : \atoms \to \atoms$ such that every automorphism of $H(\atoms)$ extends
to an  $S$-automorphism of $\atoms$.
%
%
Atoms in $H(\atoms)$ will be called \emph{$S$-independent}.
Along the same lines, a node $x \in A\cup B$ will be called \emph{$S$-independent} if $\supp{x} \subseteq H(\atoms)$.
%
%
By $\omega$-categoricity and denseness of atoms we have:
\begin{claim} \label{claim:orbit}
Every $\emptyset$-orbit inside $A \cup B$ contains an $S$-independent node.
\end{claim}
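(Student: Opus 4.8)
### Proof strategy for Claim~\ref{claim:orbit}

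The plan is to take an arbitrary $\emptyset$-orbit $O$ inside $A\cup B$, pick any node $x\in O$, and then transport $x$ into an $S$-independent node by applying a suitable atom automorphism that lands $\supp{x}$ inside $H(\atoms)$. The key point is that being in the same $\emptyset$-orbit is exactly closure under \emph{all} atom automorphisms, so as long as we find \emph{some} automorphism $\sigma$ with $\sigma(\supp{x})\subseteq H(\atoms)$, the node $\sigma x$ lies in $O$ and, by Lemma~\ref{lem:supp}, has support $\sigma(\supp x)\subseteq H(\atoms)$, hence is $S$-independent by definition.

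First I would fix $x\in O$ and let $S_x=\supp x$, a finite set of atoms. Since $H$ is a self-embedding of $\atoms$, its image $H(\atoms)$ is an isomorphic copy of $\atoms$, and in particular is infinite and contains, for every $n$, tuples in every orbit that $\atoms^n$ realizes (because $H$ preserves all first-order formulas). Enumerate $S_x=\{a_1,\dots,a_n\}$. The tuple $(a_1,\dots,a_n)$ lies in some orbit of $\atoms^n$; by $\omega$-categoricity (Theorem~\ref{thm:RN}) this orbit is defined by a first-order formula $\phi(x_1,\dots,x_n)$, and since $H$ is a partial automorphism, this same orbit is realized inside $H(\atoms)$ as well. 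Concretely, choosing any witnesses in $H(\atoms)$ realizing $\phi$ gives a tuple $(b_1,\dots,b_n)\in H(\atoms)^n$ satisfying exactly the same first-order formulas as $(a_1,\dots,a_n)$.

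The map $a_i\mapsto b_i$ is therefore a finite partial automorphism of $\atoms$, so by Lemma~\ref{lem:extend} it extends to a full atom automorphism $\sigma$. Then $\sigma(S_x)=\{b_1,\dots,b_n\}\subseteq H(\atoms)$, and by Lemma~\ref{lem:supp} we have $\supp{\sigma x}=\sigma(\supp x)=\sigma(S_x)\subseteq H(\atoms)$. Thus $\sigma x$ is $S$-independent, and since $\sigma$ is an atom automorphism, $\sigma x\in O$, which proves the claim.

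The main obstacle, and the step that genuinely uses the hypotheses, is guaranteeing that a copy of the orbit of $(a_1,\dots,a_n)$ actually occurs inside $H(\atoms)$; this is where $\omega$-categoricity enters, giving the first-order definability of orbits and hence their realization in any elementary substructure such as $H(\atoms)$. Once a matching tuple in $H(\atoms)$ is secured, the extension to a global automorphism via Lemma~\ref{lem:extend} and the equivariance of support via Lemma~\ref{lem:supp} are routine. Note that denseness of $\atoms$ is what justified the existence of the embedding $H$ fixed before the claim, so it is implicitly doing work here, but the realization argument itself needs only $\omega$-categoricity together with the fact that $H$ is a self-embedding.
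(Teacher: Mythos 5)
Your proof is correct and follows essentially the same route as the paper: produce an atom automorphism carrying $\supp{x}$ into $H(\atoms)$ by extending a finite partial automorphism via Lemma~\ref{lem:extend}, then conclude with equivariance of supports (Lemma~\ref{lem:supp}). The only difference is that the paper obtains the finite partial automorphism directly as the restriction of $H$ to $\supp{x}$ (i.e., takes $b_i = H(a_i)$), so your detour through Theorem~\ref{thm:RN} and realization of the orbit inside $H(\atoms)$, while sound, is unnecessary.
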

\noindent
(The claim, as well as few other claims formulated below, will be proved below once the proof of Theorem~\ref{thm:elim} is outlined.)

We are going to construct an $\emptyset$-definable bijection $h : A\to B$ which will be later shown to be an isomorphism; 
to this end we will define inductively a sequence of $\emptyset$-definable partial bijections
\[
h_i : A \to B,
\]
for $i = 0, 1, \ldots, m$, where $m$ is the number of $\emptyset$-orbits inside $A$ (or $B$, as made explicit below),
such that the domain $\dom(h_{i+1})$ of every $h_{i+1}$ extends 
the domain $\dom(h_i)$ of $h_i$ by one $\emptyset$-orbit inside $A$.
The required bijection will be $h = h_m$.
The order of adding $\emptyset$-orbits to the domain of $h$ will be relevant for showing that $h$ is an isomorphism.

We start easily by taking as $h_0$ the empty function.
For the induction step, suppose that $h_n$ is already defined. 
Among the remaining $\emptyset$-orbits of $A$ and $B$, i.e., among those which are not included in 
$\dom(h_n) \cup \codom(h_n)$, choose an orbit
$O$ whose elements have maximal dimension (cf.~Corollary~\ref{cor:dim}).
W.l.o.g.~assume that $O \subseteq A$ (if $O\subseteq B$, change the roles of $\str A$ and $\str B$, and replace 
$f$ and all $h_i$ for $i\leq n$ by their inverses).
Choose an arbitrary $S$-independent node $x_0 \in O$.
We observe that the $S$-orbit of $x_0$ does not depend on the choice of $x_0$:
\begin{claim} \label{claim:M}
All $S$-independent nodes in $O$ belong to the same $S$-orbit.
\end{claim}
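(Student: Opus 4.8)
The plan is to exploit the defining property of the embedding $H$: every automorphism of $H(\atoms)$ extends to an atom $S$-automorphism of $\atoms$. Take two $S$-independent nodes $x, y \in O$, so that $\supp{x}, \supp{y} \subseteq H(\atoms)$. Since $x$ and $y$ lie in the same $\emptyset$-orbit, there is an atom automorphism $\pi$ with $\pi x = y$. My goal is to replace this arbitrary $\pi$ by an $S$-automorphism $\sigma$ that still sends $x$ to $y$; this will place $x$ and $y$ in the same $S$-orbit, as required.

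First I would record that $\pi$ maps $\supp{x}$ onto $\supp{y}$: by Lemma~\ref{lem:eqsup}, $\supp{y} = \supp{\pi x} = \pi(\supp{x})$, and by Corollary~\ref{cor:dim} these two supports are finite of equal dimension. Consequently the restriction $\pi|_{\supp{x}}$ is a finite partial automorphism of $\atoms$ whose domain $\supp{x}$ and range $\supp{y}$ are both contained in $H(\atoms)$.

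The crucial observation is that such a partial map is in fact a finite partial automorphism of the structure $H(\atoms)$ itself. Indeed, because $H$ is a self-embedding it preserves all first-order formulas, so $H$ is an isomorphism of $\atoms$ onto $H(\atoms)$; in particular $H(\atoms)$ is again $\omega$-categorical, and a partial bijection between finite subsets of $H(\atoms)$ preserves the first-order formulas evaluated in $H(\atoms)$ exactly when it preserves them in $\atoms$. Applying Lemma~\ref{lem:extend} to the $\omega$-categorical structure $H(\atoms)$, I extend $\pi|_{\supp{x}}$ to an automorphism $\tau$ of $H(\atoms)$. Now the property of $H$ furnished by denseness lets me extend $\tau$ to an atom $S$-automorphism $\sigma$ of $\atoms$.

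It remains to check that $\sigma$ does the job. By construction $\sigma$ and $\pi$ agree on $\supp{x}$, so Lemma~\ref{lem:eqaction} gives $\sigma x = \pi x = y$. Since $\sigma$ is an $S$-automorphism, $x$ and $y$ belong to the same $S$-orbit, proving the claim. The one delicate step, which I would treat carefully, is the descent of $\pi|_{\supp{x}}$ to a genuine partial automorphism of $H(\atoms)$: this is where the assumption that $H$ is a self-embedding (preserving all first-order formulas, not merely quantifier-free ones) is essential, since it is what guarantees both that $H(\atoms)$ is $\omega$-categorical and that Lemma~\ref{lem:extend} may legitimately be invoked inside $H(\atoms)$.
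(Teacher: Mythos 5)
Your proposal is correct and follows essentially the same route as the paper's own proof: pick $\pi$ with $\pi x = y$, note $\pi(\supp{x}) = \supp{y} \subseteq H(\atoms)$, extend the restriction $\pi|_{\supp{x}}$ to an automorphism of $H(\atoms)$, lift it to an atom $S$-automorphism by denseness, and conclude via Lemma~\ref{lem:eqaction}. Your extra care in justifying that $\pi|_{\supp{x}}$ is a genuine partial automorphism of the structure $H(\atoms)$ (since $H$ is elementary, so $H(\atoms)$ is an isomorphic, hence $\omega$-categorical, copy of $\atoms$) merely spells out a step the paper leaves implicit.
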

\noindent
We call this $S$-orbit $M \subseteq O$ the \emph{starting} $S$-orbit inside $O$, and all its elements \emph{starting nodes}.

Intuitively, an obvious idea would be to declare $h_{n+1}(x_0) = f(x_0)$ and then to close
under atom automorphisms to lift the definition of $h_{n+1}$ to the full orbit $O$.
However, it might be the case that $y_0 = f(x_0) \in B$ is already contained in $\codom(h_n)$.
In such case we return back to $A$ using ${h_n}^{-1}$, which yields $x_1 := {h_n}^{-1}(y_0) \in \dom(h_n)$.
Continuing in this way, we define a sequence $x_0, y_0, x_1, y_1, x_2, \ldots$ of nodes, alternating between 
nodes of $\str A$ and nodes of $\str B$, by the following equalities:
\begin{align} \label{eq:alt}
h_n(x_{i+1}) \quad = \quad y_{i} \quad & = \quad f(x_i).
\end{align}
In words, $y_{i}$ is obtained from $x_i$ by applying $f$, and $x_{i+1}$ is obtained from $y_{i}$ by applying
$(h_n)^{-1}$. Clearly, the latter application is well-defined only when $y_{i} \in \codom(h_n)$. 
We thus stop generating the sequence as soon as $y_i \notin \codom(h_n)$.
We need however to prove that this will eventually happen, i.e., that the sequence is finite:
\begin{claim} \label{claim:length}
$y_l \notin \codom(h_n)$, for some $l \geq 0$.
\end{claim}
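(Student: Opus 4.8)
The plan is to show that the alternating sequence $x_0, y_0, x_1, y_1, \ldots$ defined by~\eqref{eq:alt} cannot continue indefinitely while all $y_i$ remain inside $\codom(h_n)$. First I would observe that $h_n$ is a \emph{partial bijection} on a finite union of $\emptyset$-orbits, and $f$ is a genuine bijection $A \to B$; hence the composite $g := (h_n)^{-1} \circ f$, wherever defined, is an injective partial map from $A$ to $A$, and the sequence of $x_i$'s is precisely the forward orbit of $x_0$ under $g$. Since $x_{i+1} = (h_n)^{-1}(f(x_i))$ requires $f(x_i) \in \codom(h_n)$, as long as the sequence does not stop we stay inside $\dom(h_n)$ from $x_1$ onward. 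The key point is therefore to exhibit a quantity that strictly decreases (or an invariant ruling out a cycle), forcing the process to exit $\codom(h_n)$.

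The natural invariant is the \emph{dimension}, i.e.~the cardinality $|\supp{x_i}|$ (Corollary~\ref{cor:dim}, Lemma~\ref{lem:supfun}). Recall $O$ was chosen to have \emph{maximal} dimension among the orbits not yet consumed, so its starting node $x_0$ has dimension at least that of every node in $\dom(h_n)\cup\codom(h_n)$. Using Lemma~\ref{lem:supfun} for the $\emptyset$-definable maps $f$ and $h_n$, one gets $\supp{y_i}\subseteq\supp{f}\cup\supp{x_i}$ and $\supp{x_{i+1}}\subseteq\supp{h_n}\cup\supp{y_i}$; since $f$ is $S$-definable and $h_n$ is $\emptyset$-definable one must track supports carefully, but the thrust is that dimensions cannot increase along the sequence once the maximal-dimension orbit $O$ has been selected. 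Combined with injectivity of $g$, a strictly monotone or bounded-then-cyclic dimension profile yields a contradiction: if the sequence never left $\codom(h_n)$ it would be infinite, yet it lives in the finite union $\dom(h_n)$ of orbits, so it would have to cycle, returning to some previously visited $x_j$; but $g$ is injective and $x_0\in O$ lies \emph{outside} $\dom(h_n)$ (as $O$ is a fresh orbit), so $x_0$ can never be revisited, and more generally a returning cycle would force $x_0\in\codom(\text{some consumed part})$, contradicting that $O$ is disjoint from $\dom(h_n)\cup\codom(h_n)$.

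Concretely, I would argue as follows. Each $x_i$ for $i\geq 1$ lies in $\dom(h_n)$, hence in one of the finitely many already-consumed $\emptyset$-orbits, all of dimension $\leq \dim(x_0)$. The map $x_i \mapsto x_{i+1}$ is injective (being $(h_n)^{-1}\circ f$ on its domain), so the $x_i$ with $i\geq 1$ are pairwise distinct as long as the sequence continues. Were the sequence infinite, these distinct elements would exhaust an infinite subset of the finite union of consumed orbits — impossible only if each orbit itself is finite, which it need \emph{not} be; so the genuine finiteness comes instead from the \emph{$S$-independence} bookkeeping (Claim~\ref{claim:orbit}, Claim~\ref{claim:M}) together with dimension: within a fixed dimension bound and with $S$-independent representatives, the reachable configurations modulo $S$-orbit are finite, and injectivity then forbids a cycle through the fresh orbit $O$. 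Thus some $y_l\notin\codom(h_n)$, which is exactly Claim~\ref{claim:length}.

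The main obstacle I anticipate is precisely disentangling ``finiteness'' here: the individual $\emptyset$-orbits are infinite sets, so a naive pigeonhole on elements fails, and one must run the counting argument at the level of orbits or supports rather than points. The delicate step is controlling how the support (hence dimension) of $x_{i+1}=(h_n)^{-1}(f(x_i))$ relates to that of $x_i$, because $f$ carries the parameters $S$ whereas $h_n$ does not; making the dimension genuinely non-increasing, and then ruling out a nontrivial cycle back into the consumed region via injectivity of $(h_n)^{-1}\circ f$ and the freshness of $O$, is the crux. Once that monotonicity and acyclicity are pinned down, termination of the alternating chase — and hence the existence of the desired $l$ — follows.
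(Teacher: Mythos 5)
You correctly sense that the pigeonhole must run at the level of $S$-orbits rather than elements (your second paragraph's element-level pigeonhole fails, as you yourself note, because the consumed $\emptyset$-orbits are infinite sets). But the step you defer as ``the crux'' --- ruling out an orbit-level cycle --- is precisely the missing idea, and it is not supplied by the dimension bookkeeping you propose. What is needed is that, because $f$ and $h_n$ are $S$-definable injections, they not only preserve but also \emph{reflect} the relation of lying in the same $S$-orbit; this follows from Lemma~\ref{lem:equivfunc1} (commutation with atom $S$-automorphisms), and it is what makes the alternating dynamics descend to a well-defined \emph{injective} partial map on the finite set of $S$-orbits (finite by Lemma~\ref{lem:orbitfinite} --- no dimension bound or $S$-independent representatives are needed for this finiteness). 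Granting that, the argument closes as in the paper: if the sequence never left $\codom(h_n)$, then two of the $y_i$'s would share an $S$-orbit, say $\orbit{S}{y_j}=\orbit{S}{y_k}$ with $j<k$; reflection lets one propagate this equality \emph{backwards} along the sequence to $\orbit{S}{x_0}=\orbit{S}{x_{k-j}}$, and this is absurd because $x_{k-j}\in\dom(h_n)$ while $x_0\notin\dom(h_n)$, and $\dom(h_n)$, being a union of $\emptyset$-orbits, is a union of $S$-orbits, so it cannot contain one point of an $S$-orbit without containing all of it. Your proposal has the right ingredients in orbit (injectivity, freshness of $O$, finitely many $S$-orbits) but never legitimizes the quotient dynamics nor performs the backward propagation, so as written it does not constitute a proof.

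A secondary but real problem is that your ``natural invariant'' is the wrong one: the support/dimension apparatus (Lemma~\ref{lem:supfun}, maximality of the dimension of $O$, $S$-independence of $x_0$, Claims~\ref{claim:orbit} and~\ref{claim:M}) plays no role in Claim~\ref{claim:length}. In the paper, those facts are used later, in the proof of Claim~\ref{claim:extends}; here they are a red herring, and attempting to make ``dimension non-increasing along the chase'' do the work of termination cannot succeed, since nothing about dimensions distinguishes staying in $\codom(h_n)$ from leaving it. The claim needs only three things: orbit-finiteness of $B$ under $S$-automorphisms, equivariance (hence orbit-injectivity) of the $S$-definable maps $f$ and $h_n^{-1}$, and the fact that the fresh orbit $O$ is disjoint from $\dom(h_n)$.
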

(In particular, when $n=0$ the claim necessarily holds for $l = 0$.)
The $S$-orbit $M'$ of $y_l$ will be called the starting $S$-orbit inside the $\emptyset$-orbit $O'$ of $y_l$ 
($\str A$ and $\str B$ are treated symmetrically here).
The number $l$ will be called the length of the starting orbit $M$, and $n+1$ will be called its order; we write
$\order{M} = n+1$, and $\length{M} = l$. 
By abuse of notation, we will also assign the same order and length 
to the $\emptyset$-orbit $O$, and to every element of $O$.

We are now ready to define the bijection $h_{n+1}$ with $\dom(h_{n+1}) = \dom(h_n) \cup O$.
It agrees with $h_n$ on $\dom(h_n)$; on the orbit $O$, we define $h_{n+1}$
by extending the mapping $x_0 \mapsto y_l$ to all of $O$.
We claim that there is a (unique) $\emptyset$-definable bijection between the $\emptyset$-orbit $O$ of $x_0$ and 
the $\emptyset$-orbit $O'$ of $y_l$ that maps $x_0$ to $y_l$:
\begin{claim} \label{claim:extends}
The set of pairs $\setof{(\pi x_0, \pi y_l)}{\pi \text{ is an atom automorphism}}$ is an $\emptyset$-definable bijection between $O$ and $O'$.
\end{claim}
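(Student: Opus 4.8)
The plan is to exhibit the set of pairs $R = \setof{(\pi x_0, \pi y_l)}{\pi \text{ an atom automorphism}}$ as an equivariant bijection between the two $\emptyset$-orbits and then to invoke Lemma~\ref{lem:equivfunc2}. By construction $R$ is closed under the diagonal action of atom automorphisms, and its two projections are exactly $O$ and $O'$; hence the only thing to check is that $R$ is single-valued in both directions. Single-valuedness of $R$ as a map $O \to O'$ amounts to the inclusion $\stab{x_0} \subseteq \stab{y_l}$ (if $\pi x_0 = \pi' x_0$ then $(\pi')^{-1}\pi$ fixes $x_0$, and we want it to fix $y_l$), and injectivity amounts to the reverse inclusion; so the whole claim reduces to the equality of stabilizers $\stab{x_0} = \stab{y_l}$ in the full group of atom automorphisms. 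Once this is in hand, $R$ is an equivariant bijection between the $\emptyset$-definable sets $O$ and $O'$, hence $\emptyset$-definable by Lemma~\ref{lem:equivfunc2}, and it is the unique such bijection sending $x_0$ to $y_l$ by equivariance.

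To prove $\stab{x_0} \subseteq \stab{y_l}$, I would first treat atom $S$-automorphisms. Since $f$ is $S$-definable and each $h_i$ is $\emptyset$-definable (hence $S$-definable), both commute with every atom $S$-automorphism by Lemma~\ref{lem:equivfunc1}. Consequently, if an atom $S$-automorphism fixes $x_0$, an easy induction along the defining equalities~\eqref{eq:alt} shows it fixes every node $x_0, y_0, x_1, \ldots, y_l$ of the alternating sequence, and in particular $y_l$. The purpose of working with the $S$-independent representative $x_0$ is that this special case suffices: given an arbitrary $\sigma \in \stab{x_0}$, Lemma~\ref{lem:supp} gives $\sigma(\supp{x_0}) = \supp{x_0}$, so $\sigma$ restricts to a finite partial automorphism of $H(\atoms)$ with domain $\supp{x_0} \subseteq H(\atoms)$; since $H(\atoms) \cong \atoms$ is $\omega$-categorical, Lemma~\ref{lem:extend} extends this restriction to an automorphism of $H(\atoms)$, which by denseness of $\atoms$ extends further to an atom $S$-automorphism $\sigma'$ agreeing with $\sigma$ on $\supp{x_0}$. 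By Lemma~\ref{lem:eqaction}, $\sigma' x_0 = \sigma x_0 = x_0$, so the special case yields $\sigma' y_l = y_l$.

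It remains to transfer this from $\sigma'$ back to $\sigma$, and this is where the stopping rule becomes essential. By Lemma~\ref{lem:eqaction}, the desired identity $\sigma y_l = \sigma' y_l = y_l$ follows as soon as $\sigma$ and $\sigma'$ agree on $\supp{y_l}$; since they already agree on $\supp{x_0}$, it suffices to know that $\supp{y_l} = \supp{x_0}$. I expect this support equality to be the main obstacle and the technical heart of the claim. The bound $\supp{y_l} \subseteq S \cup \supp{x_0}$ is immediate from Lemma~\ref{lem:supfun} applied along~\eqref{eq:alt} (using $\supp f \subseteq S$ and $\supp{h_n} = \emptyset$), so the real content is that no atom of $S$ outside $\supp{x_0}$ survives in $\supp{y_l}$ — equivalently, that the terminal node $y_l$ is again $S$-independent. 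This is exactly what the alternating construction is engineered to secure: whenever $f$ introduces a dependence on $S$, the resulting node lands inside $\codom(h_n)$, and the step $h_n^{-1}$ (which preserves least supports and stabilizers by the inductive invariant maintained on $h_n$, together with the fact that orbits are processed in order of non-increasing dimension) feeds it back; by Claim~\ref{claim:length} the sequence halts only at a node lying outside $\codom(h_n)$, which forces $\supp{y_l}\subseteq H(\atoms)$ and hence $\supp{y_l}=\supp{x_0}$.

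Granting the support equality, $\sigma y_l = y_l$, establishing $\stab{x_0}\subseteq\stab{y_l}$; the reverse inclusion $\stab{y_l}\subseteq\stab{x_0}$ follows by the symmetric argument, interchanging the roles of $\str A,\str B$ and of $f,h_n^{-1}$ exactly as in the construction. This yields $\stab{x_0}=\stab{y_l}$ and, via the reduction of the first paragraph, completes the proof of the claim.
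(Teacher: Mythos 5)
Your overall skeleton matches the paper's proof: reduce the claim to the equivalence ``$\pi x_0 = x_0$ iff $\pi y_l = y_l$ for all atom automorphisms $\pi$'' and invoke Lemma~\ref{lem:equivfunc2}; establish this first for atom $S$-automorphisms using $S$-definability of $f$ and $h_n$; then lift to an arbitrary $\pi$ fixing $x_0$ by restricting it to $\supp{x_0}\subseteq H(\atoms)$, extending to an automorphism of $H(\atoms)$ and, by denseness, to an atom $S$-automorphism $\pi'$, and transferring the conclusion back via Lemma~\ref{lem:eqaction}. You also correctly isolate the crux: the transfer works only if $\supp{y_l}=\supp{x_0}$, i.e., if $y_l$ is again $S$-independent.

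Precisely at that crux, however, your argument is not a proof. The assertion that ``whenever $f$ introduces a dependence on $S$, the resulting node lands inside $\codom(h_n)$'' is unjustified and is not the mechanism at work, and the ``inductive invariant maintained on $h_n$'' is never formulated or established (though support- and stabilizer-preservation of $h_n$ do follow directly from its $\emptyset$-definability via Lemmas~\ref{lem:equivfunc1} and~\ref{lem:supfun}). The paper closes this gap by a counting argument with two ingredients you never combine. First, writing $y_l = g(x_0)$ for the $S$-definable bijection $g = f\circ(h_n^{-1}\circ f)^l$, Lemma~\ref{lem:supfun} applied to $g$ gives your upper bound $\supparg{S}{y_l}\subseteq\supparg{S}{x_0}$, but applied to $g^{-1}$ — the step you skip — it gives the reverse inclusion, hence $\supparg{S}{x_0}=\supparg{S}{y_l}$; since $x_0$ is $S$-independent, $\supp{x_0}=\supparg{S}{x_0}=\supparg{S}{y_l}\subseteq\supp{y_l}$, a \emph{lower} bound on $\supp{y_l}$. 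Second, Claim~\ref{claim:length} guarantees $y_l\notin\codom(h_n)$, so its $\emptyset$-orbit $O'$ is still among the remaining orbits, and because $O$ was chosen of maximal dimension among those, Corollary~\ref{cor:dim} gives $|\supp{y_l}|\le|\supp{x_0}|$. Together these force $\supp{y_l}=\supp{x_0}$, and in particular no atom of $S$ survives in $\supp{y_l}$. So the $S$-independence of $y_l$ comes from a cardinality comparison (lower bound from $g^{-1}$ plus the maximal-dimension choice of $O$), not from any step-by-step feedback property of the alternating sequence; you mention both ingredients in passing but never assemble them, and without this step the proof is incomplete.
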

\noindent
This completes the induction step of the definition of $h$.

Before proving that $h$ is an isomorphism, we formulate a concise equality describing $h$; the equality will be useful later.
The definition of $h_{n+1}$ in the induction step does not depend on the choice of the $S$-independent node $x_0 \in M$.
To see this, observe that Claim~\ref{claim:length} holds, with the same value $l$, for every other choice of $x_0 \in M$; indeed, both 
$f$ and $(h_n)^{-1}$ are $S$-definable, and hence by Lemma~\ref{lem:equivfunc1} they preserve 
the relation of belonging to the same $S$-orbit; 
thus, no matter which $S$-independent node $x_0 \in M$ is chosen, it will always belong to the same $S$-orbit (by Claim~\ref{claim:M}), hence the node
$y_l$ will always belong to the same $S$-orbit (and hence to the same $\emptyset$-orbit) inside $B$.
Thus by Lemma~\ref{lem:equivfunc1} and by the definition of $h$ we have,
%
for every starting node $x\in A$, the following equality:
\begin{align*} 
h(x) = \Big[ f \circ (h^{-1} \circ f)^l \Big] (x)
\end{align*}
where $l = \length{x}$; or equivalently (as $f$ and $h$ are bijections)
\begin{align} \label{eq:kolko}
(h^{-1} \circ f)^{l+1} (x) = x.
\end{align}
%
%
(In the special case of starting nodes $x$ of $\order{x} = 1$ (recall that 1 is the minimal possible order)  we have $\length{x} = 0$, 
and thus the above equality reduces to $h(x) = f(x)$. Thus $h$ agrees with $f$ on
the starting $S$-orbit of order 1.) 

Now we shall prove that $h$ is an isomorphism, i.e., that for every two nodes $x, x' \in A$,
\begin{align} \label{eq:iff}
(x, x') \in E \quad \text{ if, and only if } \quad (h(x), h(x')) \in F.
\end{align}
The proof is by induction on the orders of $x$ and $x'$.
Let $n = \order{x}$ and $n'  = \order{x'}$.
For the induction step, suppose that~\eqref{eq:iff} holds for all pairs $y, y'$ of starting nodes 
with $\langle m, m'\rangle = \langle\order{y}, \order{y'}\rangle$ pointwise strictly smaller than $\langle n, n' \rangle$.
Thus we assume that the claim holds for $y, y'$ whenever $m \leq n$, $m' \leq n'$, but either $m < n$ or $m' < n'$.
Let $l = \length{x}$ and $l' = \length{x'}$.

We first prove~\eqref{eq:iff} in the special case when both $x$ and $x'$ are starting nodes.
(Recall that in the special case $n = n' = 1$ we have $h(x) = f(x)$ and $h(x') = f(x')$, and the claim follows since $f$ is an isomorphism.)
Consider two sequences of nodes,
\begin{align*}
x_0, y_0, x_1, y_1, \ldots, 
\qquad \text{ and } \quad
x'_0, y'_0, x'_1, y'_1, \ldots
\end{align*}
where $x_0 = x$ and $x'_0 = x'$,
alternating between nodes of $\str A$ and $\str B$, determined by the equalities analogous to~\eqref{eq:alt}:
\begin{align*} 
h(x_{i+1}) \quad = \quad y_{i} \quad & = \quad f(x_i) &
h(x'_{i+1}) \quad = \quad y'_{i} \quad & = \quad f(x'_i).
\end{align*}
Analogously as before, the sequence is obtained by alternating applications of $f$ and $h^{-1}$.
In particular, $x_{i+1} = h^{-1}(f(x_i))$ and  $x'_{i+1} = h^{-1}(f(x'_i))$. 
Consider the smallest $k > 0$ such that $x_k = x$ and $x'_k = x'$.
Observe that equality~\eqref{eq:kolko} 
guarantees that such $k$ exists, for instance $k = (l+1) (l'+1)$ works.
Note that the equality~\eqref{eq:kolko} implies also that
\begin{align} \label{eq:h}
h(x) = y_{k-1} \qquad \text{ and } \qquad h(x') = y'_{k-1}.
\end{align}
Since $f$ is an isomorphism, for every $i \geq 0$ we have:
\begin{align} \label{eq:iffassone}
(x_i, x'_i) \in E \quad \text{ if, and only if } \quad (y_i, y'_i) \in F.
\end{align}
Using the inductive assumption we want to prove additionally 
\begin{align} \label{eq:iffasstwo}
(x_{i+1}, x'_{i+1}) \in E \quad \text{ if, and only if } \quad (y_i, y'_i) \in F,
\end{align}
for every $i$ such that $0 < i+1 < k$.
By the definition of $h$ we know that $\order{x_i} \leq n$ and $\order{x'_i} \leq n'$ for every $i \geq 0$.
Furthermore, observe that for every $0 < i < k$ we have $\order{x_i} < n$ or $\order{x'_i} < n'$; indeed, the equalities
$\order{x_i} = n$ and $\order{x'_i} = n'$, together with the bijectivity of $(h^{-1} \circ f)^i$, would imply $x_i = x$ and $x'_i = x'$.
In consequence, the induction assumption~\eqref{eq:iff} applies for every pair $(x_{i+1}, x'_{i+1})$ where $0 < i+1 < k$, which 
proves the equivalences~\eqref{eq:iffasstwo}.
Combining~\eqref{eq:iffassone} with~\eqref{eq:iffasstwo} we get
\begin{align*}
(x, x') \in E \quad \text{ if, and only if } \quad (y_{k-1}, y'_{k-1}) \in F
\end{align*}
which, by~\eqref{eq:h}, is exactly~\eqref{eq:iff}, as required.

Now we proceed to proving~\eqref{eq:iff} for arbitrary nodes $x, x'\in A$. To this end we will use the following fact,
easy to prove using $\omega$-categoricity and denseness of $\atoms$:
\begin{claim} \label{claim:starting}
For every two nodes $x, x'\in A$ there is an atom automorphism $\pi$ such that both $\pi x$ and $\pi x'$ are starting nodes.
\end{claim}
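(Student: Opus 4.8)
The plan is to reduce the notion of \emph{starting node} to the more concrete condition of $S$-independence, and then to realize both nodes simultaneously inside $H(\atoms)$ by a single automorphism. First I would recall that a node $z$ lying in an $\emptyset$-orbit $O$ inside $A\cup B$ is a starting node precisely when it is $S$-independent: by Claim~\ref{claim:M} all $S$-independent nodes of $O$ form a single $S$-orbit, which is by definition the starting $S$-orbit $M\subseteq O$. Hence it suffices to produce an atom automorphism $\pi$ for which $\pi x$ and $\pi x'$ are both $S$-independent, because atom automorphisms preserve $\emptyset$-orbits, so $\pi x$ and $\pi x'$ stay in the $\emptyset$-orbits of $x$ and $x'$ respectively, and are therefore starting nodes of those orbits.

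By Lemma~\ref{lem:eqsup} the support commutes with automorphisms, so $\supp{\pi x}=\pi(\supp x)$ and $\supp{\pi x'}=\pi(\supp{x'})$. Recalling that a node is $S$-independent exactly when its least support is contained in $H(\atoms)$, the task reduces to finding an atom automorphism $\pi$ with
\[
\pi\big(\supp x\cup\supp{x'}\big)\subseteq H(\atoms).
\]
Write $S_0=\supp x\cup\supp{x'}$, which is a finite set of atoms.

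The crucial observation is that the fixed self-embedding $H$ already supplies such a map. Since $H$ is a self-embedding of $\atoms$, it is in particular a partial automorphism, so its restriction $H|_{S_0}$ to the finite set $S_0$ is a finite partial automorphism of $\atoms$ sending $S_0$ into $H(\atoms)$. By Lemma~\ref{lem:extend} (which is where $\omega$-categoricity enters), this finite partial automorphism extends to a full atom automorphism $\pi$ of $\atoms$. Then $\pi(S_0)=H(S_0)\subseteq H(\atoms)$, so by Lemma~\ref{lem:eqsup} both $\supp{\pi x}$ and $\supp{\pi x'}$ lie in $H(\atoms)$; that is, $\pi x$ and $\pi x'$ are $S$-independent, hence starting nodes, as required.

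The only subtlety — and the step I would be most careful with — is the simultaneity: one must move $x$ and $x'$ into their respective starting orbits by the \emph{same} automorphism $\pi$. This is exactly why I would take $S_0$ to be the union of the two supports and extend a single finite partial automorphism, rather than treating $x$ and $x'$ in isolation. Everything else is routine, relying only on the behaviour of least supports (Lemma~\ref{lem:eqsup}), the extension property of finite partial automorphisms (Lemma~\ref{lem:extend}), and the characterization of starting nodes via $S$-independence furnished by Claim~\ref{claim:M}.
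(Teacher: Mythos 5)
Your proof is correct and follows essentially the same route as the paper: restrict the self-embedding $H$ to the finite set $\supp{x}\cup\supp{x'}$, extend this finite partial automorphism to a full atom automorphism $\pi$ by $\omega$-categoricity (Lemma~\ref{lem:extend}), and conclude that $\pi x$ and $\pi x'$ are $S$-independent, hence starting by Claim~\ref{claim:M}. The only blemish is your opening assertion that a node is starting \emph{precisely when} it is $S$-independent --- the converse implication is not justified, since the $S$-orbit of an $S$-independent node may also contain nodes that are not $S$-independent --- but your argument only uses the direction ``$S$-independent implies starting'', which is exactly what Claim~\ref{claim:M} provides.
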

For any $\pi$ in Claim~\ref{claim:starting} we have the following sequence of equivalences:
\begin{align}  
\begin{aligned} \label{eq:finaliff}
(x, x') \in E  & \ \text{ iff } \\ 
(\pi x, \pi x') \in E  & \ \text{ iff } \\  
(h(\pi x), h(\pi x')) \in F  & \ \text{ iff } \\
(\pi h(x), \pi h(x')) \in F  & \ \text{ iff } \\ 
(h(x), h(x')) \in F &.
\end{aligned}
\end{align}
The first and the last equivalence hold as the relations $E$ and $F$, being $\emptyset$-definable, are closed under atom automorphisms
 (cf.~Lemma~\ref{lem:closed});
the second equivalence has been treated previously as both $\pi x$ an $\pi x'$ are starting nodes;
and the third equivalence holds as the function $h$ is $\emptyset$-definable (cf.~Lemma~\ref{lem:equivfunc1}).
Theorem~\ref{thm:elim} is thus proved, once we prove the yet unproven Claims~\ref{claim:orbit}--\ref{claim:starting}.
%

\begin{proof}[Proof of Claim~\ref{claim:orbit}]
Consider an $\emptyset$-orbit $O \subseteq A\cup B$. We will use the embedding $H : \atoms \to \atoms$.
%
Take any node $x\in O$ and consider the restriction of $H$ to the finite set $\supp{x} \subseteq \atoms$.
As $H$ is an embedding, the restriction is a finite partial isomorphism which extends, by $\omega$-categoricity of $\atoms$
(c.f.~Lemma~\ref{lem:extend}), 
to an atom automorphism, say $\pi$.
Then $\pi x$ is an $S$-independent node in $O$, as $\supp{\pi x} = \pi \supp{x} \subseteq H(\atoms)$
(cf.~Lemma~\ref{lem:supp}).
\end{proof}

\begin{proof}[Proof of Claim~\ref{claim:M}]
Let $x, x' \in O$ be two $S$-independent nodes, thus $\supp{x} \cup \supp{x'} \subseteq H(\atoms)$.
Take an atom automorphism $\pi$ such that $\pi x = x'$.
By Lemma~\ref{lem:supp} we have $\pi \supp{x} = \supp{x'}$.
The restriction of $\pi$ to $\supp{x}$ is a finite partial isomorphism, and hence it extends to an automorphism of $H(\atoms)$, which in turn
extends to an atom $S$-automorphism $\tau$, by denseness. Thus $\tau x = x'$
(we use Lemma~\ref{lem:eqaction} here) and hence $x$ and $x'$ are in the same $S$-orbit.
\end{proof}

\begin{proof}[Proof of Claim~\ref{claim:length}]
It is enough to prove that the nodes $y_0, y_1, \ldots$ belong to different $S$-orbits inside $B$ (as $B$ has finitely many $S$-orbits, and so does 
$\codom(h_n)$, and the claim follows).
With this aim consider the sequence of $S$-orbits:
\[
\orbit{S}{x_0}, \orbit{S}{y_0}, \orbit{S}{x_1}, \ldots 
\]
and suppose, towards contradiction, that some $S$-orbit $O$ in $B$ repeats in the sequence, say
$\orbit{S}{y_k} = \orbit{S}{y_j}$ for some indices $j < k$.
Recall that $A$ and $B$ are disjoint, and that $y_i = f(x_i)$ and $x_{i+1} = h_n^{-1}(y_i)$, for every $i$.
Since $f$ is $S$-definable, and $h_n$, being $\emptyset$-definable, is also $S$-definable, by Lemma~\ref{lem:equivfunc1} 
they both (preserve and) reflect the relation of belonging to the same $S$-orbit; therefore 
$\orbit{S}{y_{k-\alpha}} = \orbit{S}{y_{j-\alpha}}$ for $\alpha = 1, \ldots, j$, and in consequence
the first orbit $O$ necessarily repeats in the sequence:
\begin{align} \label{eq:imposs}
\orbit{S}{x_0} \ = \ \orbit{S}{x_i}
\end{align}
for $i = k-j > 0$. Recall that $x_i = (h_n^{-1}\circ f)^i$ to observe that the equality~\eqref{eq:imposs} is impossible:
$x_i \in \dom(h_n)$ while $x_0 \notin \dom(h_n)$, and $\dom(h_n)$, being the union of $\emptyset$-orbits,
is also the union of $S$-orbits.
\end{proof}

\begin{proof}[Proof of Claim~\ref{claim:extends}]
Relying on Lemma~\ref{lem:equivfunc2} it is enough to demonstrate that the set of pairs 
\[
\setof{(\pi x_0, \pi y_l)}{\pi \text{ is an atom automorphism}}
\]
is (the graph of) a bijection. In other words, it is enough to prove that for every two atom automorphism $\pi, \pi'$, 
the equality $\pi x_0 = \pi' x_0$ holds if, and only if the equality $\pi y_l = \pi' y_l$ holds.
This is equivalent to the following condition:
\begin{align} \label{eq:implies}
\text{for every atom automorphism } \pi, \ \pi x_0 = x_0 \text{ iff } \pi y_l = y_l.
\end{align}
%
%
%
Recall that $y_l = g(x_0)$, where $g = \Big[ f \circ (h_n^{-1} \circ f)^l \Big]$; as both $f$ and $h_n$ are $S$-definable (partial) bijections,
their composition $g$ is also an $S$-definable bijection, and hence the condition~\eqref{eq:implies} holds for all atom $S$-automorphisms $\pi$.
Our aim is to prove~\eqref{eq:implies} for all atom automorphisms, using $S$-independence of $x_0$.
%
%

Define the \emph{$S$-support} as $\supparg{S}{x} \ \eqdef \ \supp{x} \setminus S$.
Let $U = \supp{x_0}$. We claim that $\supp{y_l} = U$ as well, i.e., $y_l$ is $S$-independent too. 
To demonstrate this, we first observe using Lemma~\ref{lem:supfun} that the mapping $g$, being $S$-definable, can only decrease the $S$-support:
$
\supparg{S}{x_0} \supseteq \supparg{S}{y_l}.
$
But $g^{-1}$, being also $S$-definable, has the same property and hence
$
\supparg{S}{x_0} = \supparg{S}{y_l}.
$
Furthermore, $x_0$ is $S$-independent and thus satisfies $\supparg{S}{x_0} = \supp{x_0}$, which yields 
$U = \supparg{S}{y_l} \subseteq \supp{y_l}$. Finally, the dimension of $y_l$ is at most equal to the dimension of $x_0$
(because the node $x_0$ has been chosen as one with the maximal dimension), i.e., to the cardinality of $U$, 
therefore we deduce the equality $U = \supp{y_l}$.

We are now ready to prove~\eqref{eq:implies}; we focus on the left-to-right implication, as the other one is proved similarly.
Consider any atom automorphism $\pi$ satisfying $\pi x_0 = x_0$. 
By Lemma~\ref{lem:supp} we know that $\pi$ preserves the set $U$, i.e., $\pi U = U$.
We claim that some atom $S$-automorphism $\pi'$ coincides with $\pi$ on $U$.
Indeed, the restriction of $\pi$ to $U$ extends, as a finite partial isomorphism inside $H(\atoms)$, to an automorphism of $H(\atoms)$;
and the latter extends to an atom $S$-automorphism $\pi'$, by denseness of $\atoms$.
%
%
As $\pi$ and $\pi'$ coincide on $U = \supp{x_0}$,
by Lemma~\ref{lem:eqaction} we have $\pi' x_0 = \pi x_0 = x_0$. We can apply~\eqref{eq:implies} to the atom $S$-automorphism
$\pi'$, thus obtaining $\pi' y_l = y_l$.
Finally, again by Lemma~\ref{lem:eqaction} applied to $\pi$ and $\pi'$, coinciding on $U = \supp{y_l}$, we deduce
$\pi y_l = \pi' y_l$. This entails $\pi y_l = y_l$ as required.
\end{proof}

\begin{proof}[Proof of Claim~\ref{claim:starting}]
Similarly as in the proof of Claim~\ref{claim:orbit}, we use the embedding $H : \atoms \to \atoms$.
The restriction of $H$ to the finite set $\supp{x} \cup \supp{x'} \subseteq \atoms$ is necessarily a finite partial isomorphism which extends,
by $\omega$-categoricity of $\atoms$, to an atom automorphism $\pi$, such that the nodes
$\pi x$ and $\pi x'$ are both $S$-independent and hence, by Claim~\ref{claim:M}, also starting.
\end{proof}

\para{The general case} 
The additional assumptions imposed in the proof are not essential, and the proof easily adapts to the general case.
To get rid of the assumption $T = \emptyset$, one just needs to replace the atoms by the structure $\atoms'$ obtained from $\atoms$ by introducing constant symbols for each element of $T$. Then, every set $x$ which is $T$-definable in $\atoms$ becomes $\emptyset$-definable in $\atoms'$. Moreover, $\atoms'$ remains effectively $\omega$-categorical, dense, and still has least supports.

One also easily gets rid of the assumption that the signature has just one binary symbol.
First, to deal with (possibly infinitely) many symbols, in the inductive argument towards $h$ being an isomorphism one treats each symbol separately.
In case symbols of arity other than 2, say $r$, the induction is with respect to the $r$-tuples of orders, instead of pairs thereof.
The inductive argument itself, as well as Claim~\ref{claim:starting}, adapt easily to $r$-tuples.
Finally, in case of a (possibly infinite) $T$-definable signature, one needs to modify the sequence of equalities~\eqref{eq:finaliff} appropriately, in order to
take into account the action of atom automorphisms on signature symbols. For a signature symbol $\sigma$, denote by $\sigma^{\str A}$, 
$\sigma^{\str B}$ the interpretation
of $\sigma$ in $\str A$, $\str B$, respectively. Then the sequence of equalities~\eqref{eq:finaliff} is adapted as follows:
\begin{align*}  
(x_1, \ldots, x_r) \in \sigma^{\str A}  & \ \text{ iff } \\ 
(\pi x_1, \ldots, \pi x_r) \in \pi (\sigma^{\str A})  & \ \text{ iff } \\  
(\pi x_1, \ldots, \pi x_r) \in (\pi \sigma)^{\str A}  & \ \text{ iff } \\  
(h(\pi x_1),\ldots, h(\pi x_r)) \in (\pi \sigma)^{\str B}  & \ \text{ iff } \\
(h(\pi x_1),\ldots, h(\pi x_r)) \in \pi (\sigma^{\str B})  & \ \text{ iff } \\
(\pi h(x_1), \ldots, \pi h(x_r)) \in \pi (\sigma^{\str B})  & \ \text{ iff } \\ 
(h(x_1), \ldots, h(x_r)) \in \sigma^{\str B} &.
\end{align*}
We consider here the action $\pi \sigma$ on the signature symbol $\sigma$, as well as the action on the interpretation of the signature symbol,
$\pi (\sigma^{\str A})$ or $\pi (\sigma^{\str B})$. In particular, $(\pi \sigma)^{\str A}$ denotes the interpretation of the signature symbol $\pi \sigma$ in $\str A$.
As $\pi$ is a $T$-automorphism, the first, the second, the fourth and the last equivalence follow by $T$-definability of $\str A$ and $\str B$
(cf.~Lemma~\ref{lem:closed});
the third equivalence is proved previously, as $\pi x_1, \ldots, \pi x_r$ are starting nodes;
and the fifth equivalence holds since the function $h$ is $T$-definable (cf.~Lemma~\ref{lem:equivfunc1}).

\bibliographystyle{alpha}
\bibliography{bib}

\end{document}